\newcommand{\mathbbm}[1]{\text{\usefont{U}{bbm}{m}{n}#1}}
\newtheorem{theorem}{Theorem}
\newtheorem{observation}{Observation}
\newcommand{\id}{\ensuremath{\mathbbm{1}}}
\newcommand{\one}{\id}
\newcommand{\bra}[1]{\langle #1|}
\newcommand{\ket}[1]{|#1\rangle}
\newcommand{\proj}[1]{| #1\rangle \langle #1|}
\newcommand{\be}{\begin{equation}}
\newcommand{\ee}{\end{equation}}
\newcommand{\bea}{\begin{eqnarray}}
\newcommand{\eea}{\end{eqnarray}}
\newcommand{\identity}{\mathbbm{1}}
\let\vec\mathbf 
\newcommand{\forget}[1]{}
\begin{document}

\title{Computational power of matchgates with supplementary resources}

\author{M. Hebenstreit$^1$, R. Jozsa$^2$, B. Kraus$^1$ and S. Strelchuk$^2$ }
\affiliation{$^1$Institute for Theoretical Physics, University of Innsbruck, Technikerstr. 21A, 6020 Innsbruck, Austria\\
$^2$DAMTP, University of Cambridge, Cambridge CB3 0WA, UK}

\begin{abstract}
We study the classical simulation complexity in both the weak and strong senses, of matchgate (MG) computations supplemented with all combinations of settings involving inclusion of intermediate adaptive or nonadaptive computational basis measurements, product state or magic and general entangled state inputs, and single- or multi-line outputs. We find a striking parallel to known results for Clifford circuits, after some rebranding of resources.
We also give bounds on the amount of classical simulation effort required in case of limited access intermediate measurements and entangled inputs. In further settings we show that adaptive MG circuits remain classically efficiently simulable if arbitrary two-qubit entangled input states on consecutive lines are allowed, but become quantum universal for three or more lines.
 And if adaptive measurements in non-computational bases are allowed, even with just computational basis inputs, we get quantum universal power again. 
\end{abstract}

\maketitle

\section{Introduction}

When quantum computation is implemented in an actual physical system there is often an associated  particular  restricted class of physically natural gates. Two examples are the quantum computational
processes realised in the physics of  bosonic resp. fermionic, linear optics. In the bosonic case e.g. in the KLM model \cite{klm}, we have the operations of beam splitter transformations, phase-shifting gates and (possibly adaptive) projective measurements. In the fermionic case the corresponding quantum processes are captured by the elegant mathematical formalism of matchgate (MG) circuits~\cite{Va02,TeDi02,JoMi08}. Another important restricted gate set is that of Clifford gates, featuring fundamentally in a variety of aspects of quantum computation and information. 
They are especially important in the context of quantum error correction and fault tolerant quantum computation \cite{NiCh00, GottesmanThesis}, and measurement-based quantum computation \cite{RaBr01} amongst others.
The notion of Clifford gate too, rests on an elegant mathematical (group theoretic) formalism but unlike matchgates, it appears not to have an associated characteristic physical basis.

In this work we will explore the computational power of MG circuits supplemented with some further resources, and along the way we will point out some interesting parallels and contrasts to properties of Clifford circuits.

In order to obtain insight into the features which are responsible for the power of quantum computation, a fruitful quantitative approach is to investigate the complexity of its classical simulation. Furthermore, delicate relations between classically efficiently simulable quantum computations and universal ones, as can be exposed in such investigations, provide a novel approach \cite{JoSt17}  to the important issue of verification of quantum computations, in the regime beyond the remit of efficient  classical computation.

MG circuits and Clifford circuits each have an especially rich associated theory of classical simulation properties ranging from being classically efficiently simulable to being fully quantum computationally universal after inclusion of restricted kinds of further resources. For Clifford circuits, classical simulation properties begin with the celebrated Gottesman-Knill theorem \cite{NiCh00, GottesmanThesis} and inclusion of a variety of further resources is studied in \cite{JoVa14,koh}. Classical simulation properties of MG circuits are rooted in the solvability of the free fermion model in many body physics \cite{TeDi02} and they were introduced into quantum computation with the work of Valiant \cite{Va02}. Below we will give a review of established techniques for simulation of MG circuits, as this will form an underpinning for our principal results.

Mathematically, MGs are a particular class of 2-qubit
gate defined by some algebraic equations (cf
Sec. \ref{sec:preliminaries} below for a precise definition). In particular they always preserve the even and odd parity subspaces of the 2-qubit state space. MG circuits are circuits in which MGs are allowed to act only on {\em nearest neighbour} qubits. In addition to rich classical simulation properties (elaborated below) they also possess a number of further striking properties of especial interest in quantum computing. For example, under certain conditions one can `compress' the MG computation to exponentially fewer qubits, offering an exponential space saving \cite{JoKr10}. The required nearest neighbour qubit action gives rise to further interesting features related to the topology of the arrangement of qubit lines: MG circuits on qubits arranged in a ring or a straight line are classically efficiently simulatable whereas any other topology yields circuits capable of universal quantum computation~\cite{BrGa12, BrCh14}. Universal quantum power is also obtained when in addition to MGs one is allowed use of any further non-MG parity-preserving 2-qubit gate~\cite{BrGa11}.

It is well known that the quantum computational power of any (generally restricted) circuit model depends on more ingredients than just specification of the allowed gate set, and consideration of such further ingredients will be key for our main results.
We have at our choice the specification of allowable input states, such as computational basis states or general product states or so-called magic states. For outputs we may allow only a single qubit measurement (corresponding to decision problems with just a yes/no answer) or multiple qubit measurements to sample more general distributions. We may also consider intermediate measurements within the body of the circuit, which can then be non-adaptive or adaptive i.e. having their classical outcomes possibly determining the choice of subsequent gates. All of these, and in their various combinations, generally have profound effects on classical simulation properties and thus on the computational power of the allowed resources. For example a change from non-adaptive to adaptive measurements even by itself, can elevate the computational power from classically efficiently simulable to quantum universal \cite{JoVa14}. The fertility of adaptive measurements is also strikingly reflected in measurement based quantum computation \cite{RaBr01}. In addition to such possible supplementary resources we also have a variety of notions of efficient classical simulation. Here we will use classical simulation in the strong sense and in the weak sense. Roughly speaking, a polynomial time quantum computation on $n$ qubits is called efficiently classically simulable in the strong sense if each output probability and marginal can be calculated classically up to a precision of $m$ digits in classical poly$(n,m)$ time. Better suited for a fair comparison to the actual running of a quantum computer is the notion of efficient weak classical simulation, wherein we ask to be able to sample the output probability distribution of the quantum computation in randomised classical polynomial time (see Sec. \ref{sec:preliminaries}).

It is known (cf \cite{JoVa14,TeDi02,JoMi08})  that any circuit composed of Clifford gates or of matchgates is classically simulable in the strong sense, if the input is a computational basis state and the output is a final measurement on a single qubit. Deviating slightly from this setting can lead to computations which are still classically efficiently simulable, however, possibly only in the {\it weak} sense. Deviating even further, leads to computations which are very unlikely to be classically efficiently simulable. Similarly, bosonic linear optics can be classically efficiently simulated under certain constraints, but allowing adaptive measurements leads to quantum universal computing power \cite{klm}.

If we allow the input state to be any product state (still with single line output), the simulation complexity of Clifford circuits varies from classically efficiently simulable  to \#P-hard (which includes NP-hardness), depending on whether adaptive measurements are allowed or not. In contrast, the simulation of MG circuits on product states remains classically efficiently simulable~\cite{JoMi08} even with adaptive measurements being allowed~\cite{Br16}. This still holds true if in addition to MG circuits one is allowed to perform an arbitrary single-qubit gate at any stage on the first qubit line~\cite{JoMi15,Br16}. Recently, the scope of efficient (weak) classical simulability of MG circuits has been extended to the scenario of having an arbitrary product state input, arbitrary (many-line) output measurements in arbitrary 1-qubit bases, and adaptive measurements in the computational basis all being simultaneously included~\cite{Br16}. In a more experimentally realistic setting, fermionic linear optical processes retain classical simulability properties also in the presence of suitable noisy ancillas \cite{OsGu14}.

The strikingly different classical simulation complexity properties of Clifford circuits vs. MG circuits, particularly in a scenario of product state inputs and adaptive measurements, are also reflected in the notion of magic state as it applies to these two classes of circuits \cite{BrKi05,HeJo19}.
In essence, magic state inputs together with the availability of adaptive measurements provide a means for elevating the computational power of a class of circuits, here Clifford circuits or MG circuits, to universal quantum computing power. This is achieved by so-called gate gadget constructions, utilising the resource of magic state inputs and adaptive measurements to implement a further new gate giving overall a universal gate set. In the Clifford case the magic states can be single qubit states, which can be chosen to deterministically implement the $T$--gate \cite{BrKi05}. In contrast, for MG circuits the simplest pure magic state input is a 4--qubit entangled state \cite{HeJo19}, while arbitrary 1-qubit state inputs (i.e. product states) with adaptive measurements remain all classically efficiently simulable.

Hence, the number of qubits which can be entangled in the input state has a marked effect on the simulation complexity in case of adaptive MG circuits (in contrast to Cliffords where product states already provide quantum universal power). Developing this realisation we will see below, as part of our main results,  that
an interesting comparison between the simulation complexities of Clifford circuits and MG circuits arises if one replaces the cases of inputs states being either computational basis states or product states (as in the Clifford circuit results of \cite{JoVa14}), by the cases of the inputs being either product states or magic states respectively. These results are summarised in Fig. \ref{fig:mg_magic_complexity}  that applies to MGs yet turns out to exactly reproduce the pattern of simulation complexities for Clifford circuits as given in \cite{JoVa14}.

This is a principal aim of the present work  viz. to determine the classical simulation cost of MG computations when supplemented with magic states and/or adaptive measurements, for both the single--line output and the multi--line output cases. Furthermore, we generalise these results to the scenarios where only a limited amount of either of these resources is available.
We will also address two natural associated issues, the simulation complexity when one has access to adaptive measurements in bases different from the computational basis, and when one is able to supply arbitrary entangled input states.

The paper is structured as follows. First we summarise our findings in Sec \ref{sec:Results}. In Section \ref{sec:preliminaries} we introduce our notation, present a short review of the two most commonly used techniques for classically simulating MG circuits, and we review the notion of magic states in the context of MG circuits. In Section~\ref{sec:clsim} we begin the development of our results, establishing first the classical simulation complexity of MG circuits in the setting allowing magic input states and adaptive measurements in the computational basis, and we consider its dependence on the amount of each of these resources that is used.
In Section~\ref{sec:hardnessclassification} we derive classical simulation complexities for further MG settings, treating various kinds of input state, type of measurements, the number of lines measured, and the type of simulation required. This will complete the demonstration of all the results shown in Fig. \ref{fig:mg_magic_complexity}, and establish the resemblance to corresponding results for Clifford gates claimed above. Finally, in Section \ref{sec:additionalSimu} we present our results on classical simulability of MG computations with two qualitatively different further types of resources. Firstly, in~\ref{sec:twoqubitinput} we consider MG circuits supplied with entangled input states on consecutive lines. We show that for the case of 2-qubit states, such circuits with adaptive measurements in the computational basis  can be classically efficiently simulated, while the case of 3- or more qubit states can provide universal quantum power. Secondly,
in Section~\ref{sec:adaptivemeas} we investigate MG circuits with the additional resource of adaptive measurements in arbitrary bases and show that this suffices again for universal quantum computing power.

\section{Summary of Results}
\label{sec:Results}

To be able to concisely refer to the broad variety of computational resources that we will be considering, we introduce associated abbreviations following those in \cite{JoVa14}.
 IN(BITS) and IN(PROD) will refer respectively to the setting where the input state is an arbitrary computational basis state or arbitrary product state. IN(MAGIC) will refer to the setting where the input state may contain also magic states in addition to product states. We will refer to a scenario having non-adaptive or adaptive intermediate measurements as ADAPT or NONADAPT respectively. Scenarios where we measure either a single or an arbitrary number of lines (possibly all lines) at the end of the circuit will be referred to as OUT(1) and OUT(MANY). Unless stated otherwise, we will always consider the final measurements to be measurements in the computational basis. We will use two standard types of classical simulation, weak and strong simulation denoted WEAK and STRONG, as defined in \cite{JoVa14} and below in Section \ref{sec:preliminaries}.
 Briefly, given a description of the computation, in the former case we wish to classically sample from the output distribution of the circuit, whereas in the latter case we wish to classically compute the probability of any designated outcome or marginal. In addition, we introduce abbreviations for various degrees of complexity of classical simulation of a scenario or setting being considered: Cl-P (``classical poly time") will  indicate that in the scenario can be {\em efficiently} classically simulated (in the weak or strong sense being considered). The  label QC-hard indicates that classical simulation of the scenario suffices for classical simulation of universal quantum computation (in the weak sense). Finally, \#P-hard  denotes that a classical simulation algorithm in this setting would also be capable of solving any problem in the class \#P, which includes NP.
 
 In terms of all these abbreviations, a summary of our MG circuit classical simulation results is given in Fig. 1.  These results resemble those for Clifford gates \cite{JoVa14}. In fact strikingly, if in Fig. 1 we change IN(PROD) to IN(BITS), and IN(MAGIC) to IN(PROD) (and in the body of Fig. 1 change ``Using SWAP gadget" to ``Using $T$ gadget") we obtain precisely the pattern of classical simulation complexities for Clifford circuits given in \cite{JoVa14}.

In order to derive these results, 
we utilise the two distinct techniques of simulating MG circuits that were introduced in~\cite{TeDi02} and~\cite{JoMi08}. One can be applied to simulate MG circuits with product state inputs and adaptive measurements (see \cite{Br16}) while the other works well for MG circuits with entangled inputs, but without adaptive measurements (see Observation \ref{obs:entangledinput}). It turns out that both of these can be extended to handle circuits containing ``a few'' disallowed elements in each of the techniques. In particular, we show how to simulate circuits with (i) many entangled input states and a few adaptive measurements (up to ${\cal O} (\log n)$), and (ii) circuits with many adaptive measurements and a few (a constant number of) entangled input lines (Theorems \ref{theo:fewadaptive} and \ref{theo:fewentangled}, respectively). Hence, adding a limited amount of the resources remains classically simulable; however, if too much of these resources are added, then the simulation method becomes classically infeasible. 

To prove the simulation results in Fig. 1 we utilise the techniques developed for Clifford circuits in \cite{JoVa14} and first show that the situation OUT(1), IN(PROD), ADAPT, STRONG is $\# P$--hard (see Theorem \ref{TH:sharpP}). This implies that the same setting except now with the more general input (IN(MAGIC)) is still $\# P$--hard. Similarly, allowing the more general output OUT(MANY), remains $\# P$--hard too (see rightmost column of Fig. \ref{fig:mg_magic_complexity}). Next we show that the simulation complexity of the NONADAPT case with OUT(MANY) and IN(MAGIC) is also $\# P$--hard (see Theorem \ref{thm:MIMOsharpP}). Finally to complete the table in Fig. 1, we prove that for the previous case, with STRONG replaced by WEAK, the existence of an efficient classical simulation would imply collapse of the Polynomial Hierarchy (PH) (see Theorem \ref{thm:phcollaps}), providing evidence for the non-existence of efficient weak simulation in this scenario.

Hence, apart form the previously known results (to which references are given in the body of Fig. \ref{fig:mg_magic_complexity}), it remained to show the orange shaded fields (which imply also the light-orange shaded fields as indicated by arrows) to completely characterise the simulation complexity of MG circuits for all of our considered scenarios.

\begin{figure}[!h]
\centering
\includegraphics[width=\columnwidth]{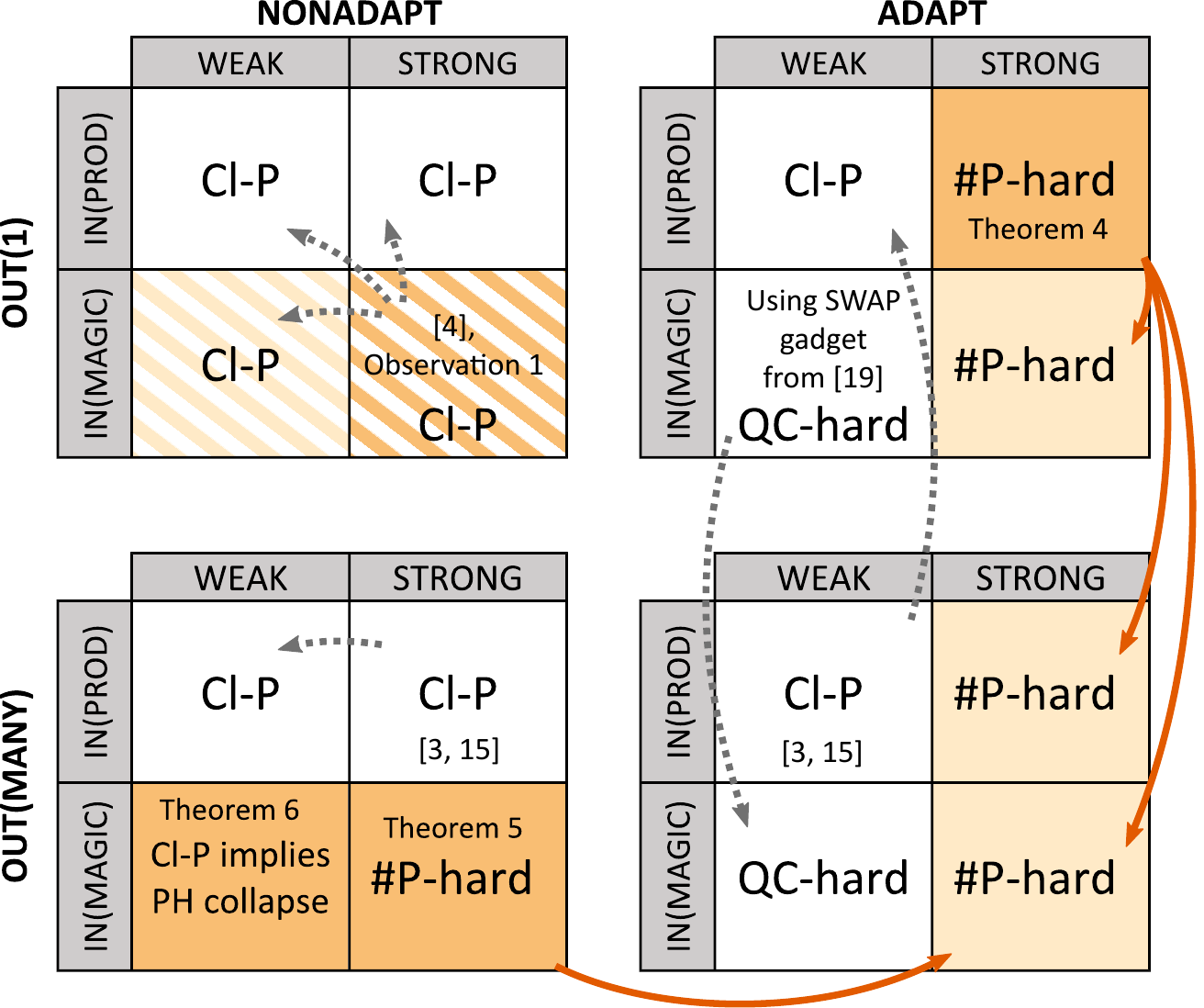}
\caption{Classical simulation complexity for different MG circuit scenarios. All abbreviations appearing are as defined in the main text. The table summarises previously known results together with our further results to provide the full picture. Statements that are proven explicitly here are shaded in orange. Fields shaded in light-orange may be inferred from these results. Arrows indicate some inferences that may be readily drawn. We use striped orange to indicate that the proof here (Observation \ref{obs:entangledinput}) is an easy corollary of results in \cite{JoMi08}.
The style of the table is borrowed from Figure 1 in \cite{JoVa14}, where Clifford circuits are studied. It is evident that this picture for MG circuits is formally identical to the one for Clifford circuits in \cite{JoVa14} after some reclassification of resources, as detailed in the main text.}
\label{fig:mg_magic_complexity}
\end{figure}

Finally we also consider classical simulation properties of two further settings for MG circuits, closely related to those considered above: firstly, when one has access to adaptive measurements in a basis different from the computational basis, and secondly, when one is able to use arbitrary entangled states as inputs.

In the first setting, it turns out that MG circuits yield universal quantum computation if in addition to measurements in the computational basis one allows adaptive measurements in the $\{\ket{+},\ket{-}\}$ basis. Moreover, this also holds if adaptive measurements are performed in any basis which is bounded away from the computational basis by an inverse polynomial in $n$ (Theorem \ref{theo:adaptarbitrary}).

Regarding the second setting, let us recall \cite{HeJo19} that any pure fermionic non--Gaussian state
(cf Section \ref{sec:preliminaries}A below for definitions)
 is a magic state for MG circuits, and also that any fermionic state of 3 or fewer qubits is Gaussian. This implies that usage of fermionic (generally entangled) states of 4 or more qubits in the input can elevate the computational power of adaptive MG circuits to quantum universal. Furthermore (cf \cite{HeJo19}) the state $\ket{\Phi^+}_{13}\ket{\Phi^+}_{24}$ of four qubits consecutively labelled as 1234 (and $\ket{\Phi^+}$ being the standard Bell state) is a magic state.
Here we show that MG circuits with arbitrary entangled 2-qubit inputs on {\em neighbouring} lines, adaptive measurements in the computational basis, and final measurements in any product basis, remain efficiently weakly simulable (Theorem \ref{theo:twoqubitinput}), generalizing the simulability result of \cite{Br16} (plus answering the open question (ii) therein). 
Furthermore as also already mentioned in \cite{HeJo19}, two copies of a non--fermionic 3-qubit entangled input states (on consecutive lines), such as the $GHZ$--state, can be utilised to generate 4-qubit magic states, which can then be used for universal quantum computation. Note that this does not contradict any previous findings because a single copy of the state $\ket{GHZ_3}$ cannot by itself be utilised directly to implement a gate giving universality with MGs; several copies have to be transformed first into each single magic state. Taken all together, these results completely characterise the classical simulability of MG circuits with respect to numbers of entangled lines in the input states.

\section{Preliminaries}
\label{sec:preliminaries}

\subsection{Notation}

We will write the standard qubit Pauli operators as $X,Y,Z$ and the identity operator as $\one$. 

A matchgate (MG) is a two--qubit unitary operator, $G(A,B)=A\oplus B$, where ${\rm det}\, A={\rm det}\, B$ and $A$ resp. $B$ acts non--trivially on the even resp. odd parity subspace of two qubits. Furthermore in MG circuits the 2-qubit matchgates are required always to act on {\em nearest neighbour} (n.n.) qubit lines only, and hereafter the term MG in any context involving many qubits will always implicitly refer to a n.n. action. Whereas a fermionic SWAP gate, fSWAP$=G(Z,X)$, is a MG, the SWAP gate, $SWAP=G(\one,X)$ is not a MG. It is known that supplementing MGs with the n.n. SWAP gate results in a quantum universal gate set \cite{JoMi08}. We mention that in contrast, in the setting of Clifford operations, the SWAP gate is already included as a Clifford gate and any 2-qubit Clifford operation may be applied to any pair of qubit lines.

MGs are well known to be closely related to the evolution of non--interacting fermions \cite{TeDi02}. We will say that a state is a {~\it fermionic state} if it has fixed parity, i.e. it is an eigenstate of the overall parity operator $Z \otimes \ldots \otimes Z$. This is the case if and only if when written in the computational basis, either all basis kets appearing contain an even number of 1s or all contain an odd number of 1s. Furthermore, we call a fermionic state a {\em Gaussian} fermionic state, if it can be generated by applying matchgates to a computational basis state.

The set of operations generated by hamiltonians which are arbitrary real linear combinations of the six hamiltonians $X_i X_{i+1}, Y_i Y_{i+1}, X_i Y_{i+1}, Y_i X_{i+1}, Z_i$ and $ Z_{i+1}$, coincides with the set of MGs acting on qubits $i$ and $i+1$. 
Let us also mention here that MGs admit a simple form when they are decomposed into the local and non-local part of two-qubit gates \cite{KrCi01}. In \cite{BrGa11} it has been shown that a two-qubit gate is a matchgate iff it can be written as
\begin{align}
\label{eq:mgdecomp}
\left( e^{i \phi_3 Z} \otimes e^{i \phi_4 Z} \right)   e^{i (\alpha X \otimes X + \beta Y \otimes Y)}    \left( e^{i \phi_1 Z} \otimes e^{i \phi_2 Z} \right)
\end{align}
 for some $\alpha, \beta, \phi_1, \phi_2, \phi_3, \phi_4 \in \mathbb{R}$.

\subsection{Classical simulation of quantum computations}

As mentioned above, we will distinguish between {\it strong} and {\it weak } classical simulation of a quantum computation. In order to recall the definitions we denote by $\{C_n\}$ a uniform family of quantum circuits (so they are poly-sized too)
acting on an $n$--qubit input state $\ket{\Psi_n}$. Suppose that the final measurement consists of measuring $k$ qubits (which might be $\mathcal{O}(n)$) in the computational basis. The probability of obtaining $x\in \{0,1\}^k$ is given by
\bea p(x\mid \Psi_n)= \bra{\Psi_n} C_n^\dagger\, \Pi_x \ C_n \ket{\Psi_n}           \eea 
where $\Pi_x$ is the projector onto the subspace of $n$ qubits spanned by all $n$-qubit computational basis states consistent with the given $k$-bit string $x$.

One says that the computation is classically efficiently simulable in the strong sense if each probability in this output distribution as well as each of its marginal probabilities, can be computed to precision of $m$ digits in classical poly$(n,m)$ time. We say that the computation is classically efficiently simulable in the weak sense if it is possible to sample from the output probability distribution in classical (randomised) poly$(n)$ time \cite{JoNi03,Va11}. 

Note that in the case of weak simulation, we can also obtain an estimate of the output probabilities as frequencies of outcomes in repeated samplings. According to the Chernoff-Hoeffding bound of probability theory (cf \cite{Va11}), $K$ samplings suffice to estimate a probability to a precision of $\mathcal{O}(\log K)$ digits with probability exponentially close (in $K$) to 1. Hence in classical poly$(n)$ time we obtain (with exponentially good probability) an estimate to precision of only $\mathcal{O}(\log n)$ digits, which is exponentially weaker than that demanded in a strong simulation (with $m$ being poly$(n)$ there).

\subsubsection{Classical simulation of MG circuits}
\label{sec:simreview}

We review here the two distinct techniques introduced in~\cite{TeDi02,Br16} and~\cite{JoMi08} for classically simulating MG circuits, that will be used in our results later. A reader familiar with the techniques for classically simulating MG circuits may skip the remainder of this subsection and continue at Section \ref{sec:magic4}.

First, we briefly recall some concepts important in the mentioned simulation techniques, and then focus on the scenario of OUT(1), and finally generalise the review to the case of OUT(MANY).

Consider a MG circuit on $n$ qubits. The $2n$ so-called Majorana operators (also called Jordan Wigner operators) are defined as the $n$-qubit operators (omitting tensor product symbols)
\begin{align*}
c_{2k-1} &= Z \ldots Z X \identity \ldots \identity\\
c_{2 k} &= Z \ldots Z Y \identity \ldots \identity,
\end{align*}
where the $X (Y)$ acts on the $k$-th line for $k = 1, \ldots , n$. These operators obey the anticommutation relations $\{c_\mu, c_\nu\} = 2 \delta_{\mu \nu} \identity$ for $\mu, \nu = 1,\ldots , 2n$. It will be convenient to also define a second set of operators fulfilling the anticommutation relations of fermionic creation and annihilation operators,
\begin{align*}
a_i = \frac{c_{2i-1} + i c_{2i}}{2} \qquad
a^\dagger_i = \frac{c_{2i-1} - i c_{2i}}{2}\\
\{a_i, a_j\} = \{a^\dagger_i, a^\dagger_j\} = 0,\  \{a_i, a^\dagger_j\} = \delta_{i j} \identity,
\end{align*}
where $i,j \in \{1, \ldots, n\}$.

The unitary action $U$ of any MG circuit can be written as $U=\operatorname{exp}(i \sum_{\mu \nu} h_{\mu\nu} c_\mu c_\nu)$, where $h$ is a real antisymmetric $2n \times 2n$  matrix, i.e. $U$ is generated by a Hamiltonian that is quadratic in the Majorana operators. Often, such a $U$ is called Gaussian. 
Crucial for the classical simulation of MG circuits is the fact that the linear span of the Majorana operators is preserved under conjugation by any MG circuit action $U$~\cite{TeDi02,JoMi08}:
\begin{align}
\label{equ:lin}
U c_\mu U^\dagger &= \sum_{\nu = 1}^{2n} R_{\mu, \nu} c_\nu, \nonumber\\
U^\dagger a_i U &= \sum_{\nu = 1}^{2n} T_{i, \nu} c_\nu, \\
U^\dagger a^\dagger_i U &= \sum_{\nu = 1}^{2n} T^*_{i ,\nu} c_\nu \nonumber.
\end{align}
(We remark in passing that interestingly~\cite{Jozsa}, this is in a sense formally similar to (but mathematically distinct from) the key property of Clifford operations underlying classical simulation results for them viz. that  the Pauli group is preserved under conjugation by Clifford operations.)
Here, $R$ in Eq.~(\ref{equ:lin})  can be easily determined from $U$, as $R=\operatorname{exp}(-4h)$ and $T$ is a complex $n \times 2n$ matrix defined by $T_{i, \nu} = \frac{1}{2}\left(R^T_{2i-1,\nu} + i R^T_{2i,\nu}\right)$. Using these relations it is then easy to see how MG circuits in the setting of IN(BITS) or IN(PROD) and OUT(1) can be classically efficiently strongly simulated (see Section \ref{sec:clsimsinglequbit}).

For the formalism that will underlie a second simulation method, let us now recall some concepts concerning the ordering of creation and annihilation operators leading to Wick's theorem \cite{TeDi02}.
Consider a product of creation and annihilation operators $A_1 A_2 \ldots A_k$, where each $A_i$ is either $a_j$ or  $a_j^\dagger$ for some $j \in \{1, \ldots, n\}$. Then, its normal-ordered form, which we denote by ${:}A_1 A_2 \ldots A_k{:}$, is defined as a rearrangement of the operators in the product $A_1 A_2 \ldots A_k$ such that all creation (annihilation) operators are on the left (right), but the ordering among the creation operators as well as the order among the annihilation operators is kept unchanged. Moreover, the contraction $\contraction{}{A_i}{}{A_j}A_i A_j$ of the pair of operators $A_i$ and $A_j$ is defined as $\contraction{}{A_i}{}{A_j}A_i A_j = A_i A_j  -  {:}A_i A_j{:}$. Clearly, $\contraction{}{a_i}{}{a_j}a_i a_j = \contraction{}{a_i}{}{a_j}a_i^\dagger a_j^\dagger = \contraction{}{a_i}{}{a_j}a_i^\dagger a_j = 0$ and $\contraction{}{a_i}{}{a_j}a_i a_j^\dagger = \delta_{i j} \identity$. A fundamental identity used in Wick's theorem is that an arbitrary product of creation and annihilation operators $A_1 A_2 \ldots A_k$ may be expressed as a sum over normal ordered forms of all possible contractions
\begin{align}
\label{eq:contractions}
A_1 A_2& \ldots A_k = \nonumber \\
  &{:}A_1 A_2 \ldots A_k{:} \nonumber\\
 +  &{:}\contraction{}{A_1}{}{A_2} A_1 A_2 \ldots A_k{:} +  {:}\contraction{}{A_1}{A_2}{A_3} A_1 A_2 A_3 \ldots A_k{:} + \ldots \nonumber\\
 +  &{:}\contraction{}{A_1}{}{A_2} A_1 A_2 \contraction{}{A_3}{}{A_4} A_3 A_4 \ldots A_k{:} +  {:}\contraction{}{A_1}{A_2}{A_3} \contraction[2ex]{A_1}{A_2}{A_3}{A_4} A_1 A_2 A_3 A_4\ldots A_k{:} + \ldots .\nonumber\\
 +  & \  \ldots
\end{align}
Importantly, the expectation values of normal ordered sequences of creation and annihilation operators (containing at least one such operator) under the state $\ket{\vec{0}}=\ket{0 \ldots 0}$ vanish. Thus, when considering the expectation value of an arbitrary sequence $A_1 A_2 \ldots A_k$ under the state $\ket{\vec{0}}$, only fully contracted terms (and there are potentially exponentially many of them) survive when utilising Eq.~(\theequation). This can be used to show that it is possible to efficiently calculate expectation values of such operator products under Gaussian fermionic states, a result known as Wick's theorem (see e.g. \cite{Br05Lagrange}). As explained in \cite{TeDi02}, when considering products of Majorana operators, i.e., $A_1 A_2 \ldots A_k$ where $A_i$  is some $c_\mu$, instead of the fermionic creation and annihilation operators, then Eq.~(\theequation) does not hold as an operator identity. Nevertheless, it does still hold when considering expectation values under the state $\ket{\vec{0}}$, with 
$\contraction{}{c_{\mu}}{}{c_{\nu}}c_{\mu} c_{\nu} = H_{\mu, \nu} \identity$, where $H$ is the block diagonal matrix  \cite{TeDi02}
\begin{equation} \label{heq}
H = \bigoplus_{j=1}^{n} \begin{pmatrix} 1 & i \\ -i & 1 \end{pmatrix}. 
\end{equation}

\subsubsection{Classical simulation of MG circuits: single qubit measurement}
\label{sec:clsimsinglequbit}

Both methods that we review here utilise the fact that $\proj{1}_1=1/2(\one+ Z_1)=1/2(\one+i c_1c_2)=a_1^\dagger a_1$. Hence, computing the output probability of a computational basis measurement on the first line of a MG circuit $U$ on the input state $\ket{\psi}$, leads to
\begin{align} \label{eq:p1}
p_1 &=\bra{\psi} U^\dagger a_1^\dagger U U^\dagger a_1 U \ket{\psi} \nonumber\\
&=\sum_{d,e} T_{1,d} T^*_{1,e}  \bra{\psi}c_e c_d \ket{\psi},
\end{align}
where we have introduced $\identity = U U^\dagger$ and used Eq.~(\ref{equ:lin}). 
This can be obviously generalised to measuring any single qubit $k$. 

The first method to simulate MG circuits which we recall here, has been introduced in \cite{JoMi08}. In the following, we will call this technique also the `Heisenberg technique'.
It makes use of the fact that efficient evaluation of the probabilities of the individual outcomes (strong simulation) is possible when the input state is a product state. The sum in Eq.~(\ref{eq:p1}) has only $\mathcal{O}(n^2)$ terms, and moreover, as the input state is a product state, the summands can be calculated efficiently as they factorise into local operators (as the $c_k$ are local operators). Note also that here and in the following, the matrices $R$ and $T$ associated to the unitary $U$, which describes the full MG circuit, may be efficiently determined by forming sequential products of the matrices $R_i$ and $T_i$ corresponding to the poly-many individual MGs comprising the circuit.

The second method to compute the output of the final single qubit measurement is based on Wick's theorem \cite{TeDi02}. There, one uses the fact that the expression in Eq.~(\ref{eq:p1}) may be rewritten as the Pfaffian of a matrix. In this method it is important that the input $\ket{\psi}$ is a computational basis state $\ket{w}$, where $w \in \{0,1\}^n$ rather than a more general product state (although this restriction was later circumvented in \cite{Br16}). The input state is then expressed as a product of Majorana operators $c_{2p_1-1}, \ldots, c_{2p_l-1}$ acting on the state $\ket{\vec{0}}$, where $l$ is the Hamming weight of the bitstring $w$. Eq.~(\ref{eq:p1}) then reads
\begin{align}
p_1 = &\sum_{d,e}  T_{1,d} T^*_{1,e} \nonumber\\
  &\ \times \bra{\vec{0}} c_{2p_l-1} \ldots c_{2p_1-1} c_e c_d c_{2p_1-1} \ldots c_{2p_l-1}\ket{\vec{0}}.
\end{align}
In Eq.~(\theequation), the product of Majorana operators can be replaced by the sum over all fully contracted terms as explained above (see Eq. (\ref{eq:contractions})). It has been shown that the expression in Eq.~(\theequation) can then be expressed as the Pfaffian of an antisymmetric $(2l+2) \times (2l+2)$-matrix $O$ \cite{TeDi02}\footnote{Note that as shown in \cite{TeDi02}, the formula obtained from the contractions depends only on the entries $O_{ij}$ for $i<j$ and defining then an antisymmetric matrix $O$ allows to rewrite this expresssion as a Pfaffian.}. 
The entries $O_{ij}$ may be constructed with the help Table II in \cite{TeDi02}, which we reprint here as Table \ref{tab:lookup1qubit}. A Pfaffian arises here since the sum over the contractions corresponds to the sum of the signed permutations of the matrix elements $O_{ij}$. Due to the sum over $d$ and $e$, these matrix elements can be computed as the matrix products given in Table \ref{tab:lookup1qubit}. More precisely, the entry $O_{ij}$ may be constructed as follows. First, one looks up the subscript labels of the Majorana operators in the $i$th and $j$th factor (counting from the left) within the product of $2l+2$ Majorana operators in Eq.~(\theequation), obtaining a $(2p-1)$- or a $d$- or an $e$- label for each of $i$ and $j$. 
Then, $O_{ij}$ may be read off from Table \ref{tab:lookup1qubit} according to the determined label types. Note that there exists only one Majorana-operator with $d$- and $e$-labels in Eq.~(\theequation) and the index 1 is omitted. Hence, for now, some of the entries in Table \ref{tab:lookup1qubit} are not required. Moreover, $i_\alpha$ and $i_\beta$ will be used to label the measured lines later on; for now $i_\alpha = i_\beta = 1$.  
Importantly, the matrix $O$ may be efficiently constructed and, moreover, as the Pfaffian of any antisymmetric matrix $A$ fulfills $\operatorname{Pf}(A)^2 = \det{(A)}$, $\operatorname{Pf}(O)$ can be efficiently computed.

\begin{table}[hbt]
\begin{tabular}{clllllll}
\hline
\multicolumn{1}{l}{} & \multicolumn{4}{c}{j}               \\ \hline
\multicolumn{1}{l}{} &                                          &  $c_{d_\beta}$                                                                         &  $c_{e_\beta}$                                                                                     &  $c_{2p_\beta-1}$  \\ \hline
\multirow{3}{*}[-5pt]{i}    &   $c_{d_\alpha}$              &  $\left(TH{T}^T\right)_{i_\alpha, i_\beta}$         &  $\left(TH{T}^\dagger\right)_{i_\alpha, i_\beta}$         &  $\left(TH\right)_{i_\alpha, 2p_\beta-1}$ \\
                                 &  $c_{e_\alpha}$              &  $\left({T}^*H{T}^T\right)_{i_\alpha, i_\beta}$ &  $\left({T}^*H{T}^\dagger\right)_{i_\alpha, i_\beta}$  &  $\left({T}^*H\right)_{i_\alpha, 2p_\beta-1}$\\
                                 &  $c_{2p_\alpha-1}$            &  $\left( H T^T\right)_{2 p_\alpha-1, i_\beta}$                            & $\left( H T^\dagger \right)_{2 p_\alpha-1, i_\beta}$                            &  $\delta_{\alpha, \beta}$ \\ \cline{1-5} 
\end{tabular}
\caption{Lookup table to construct a matrix $O_{ij}$ for $i<j$ as in \cite{TeDi02}. Here $H$ is as in Eq.~(\ref{heq}). Moreover, $\alpha$ denotes the $\alpha$th qubit which is measured (in case the operator of interest is $c_{d_\alpha}$ or $c_{e_\alpha}$), or initially prepared in the state $\ket{1}$ (in case the operator of interest is $c_{2p_\alpha-1}$), and similarly for $\beta$. As explained in the main text, considering the Pfaffian of the matrix $O$ will then allow the simulation of the MG circuit associated to $T$.}
\label{tab:lookup1qubit}
\end{table}

\subsubsection{Simulation of MG circuits involving more measurements}
\label{sec:adaptive}

We now generalise Eq.~(\ref{eq:p1}) to multi--qubit measurements. We first consider a multi-line final measurement and then include the possibility of adaptive measurements during the computation. Let us denote by $i_1 \ldots i_k$ the $k$ lines which are measured in the computational basis yielding a $k$-bit string $x=x_1\ldots x_k$, and let $U$ be the overall unitary action of the MG circuit. Using Eq.~(\ref{equ:lin}) and the fact that for each line (omitting line subscript labels) projectors are given by
\begin{equation}\label{projects}
\Pi(0)=\proj{0}=aa^\dagger \hspace{5mm}\Pi(1)=\proj{1}=a^\dagger a, 
\end{equation}
we see that the probability to obtain $x$ is given by
\begin{align}
p(x) =\sum_{d_1,e_1, \ldots, d_k, e_k}  T_{i_1,d_1}  T^*_{i_1,e_1} \ldots   T_{i_k,d_k}  T^*_{i_k,e_k} \nonumber\\ \times \bra{\psi} A_{d_1 e_1}(x_1)\ldots A_{d_k e_k}(x_k) \ket{\psi},
\end{align}
where 
\begin{equation}\label{As}
A_{de}(0)=c_d c_e \hspace{5mm} A_{de}(1)=c_e c_d.
\end{equation}

Note that the number of lines measured might be $\mathcal{O}(n)$ so that the sum in Eq.~(\theequation) might involve exponentially many terms. 
Thus generally it is not possible any more to evaluate all terms in the sum individually and to employ the Heisenberg technique for an efficient classical simulation. However, here the strength of the second simulation method \cite{TeDi02} comes into play. Given that $\ket{\psi}$ is a computational basis state, Eq.~(\theequation) may still be rewritten as the Pfaffian of an efficiently constructible matrix $O$, which is now a $[2(l+k)] \times [2(l+k)]$-matrix. This matrix may be explicitly constructed with the help of Table \ref{tab:lookup1qubit} as explained above (and cf \cite{TeDi02} for further details).

\begin{figure}[b]
\includegraphics[width=0.8\columnwidth]{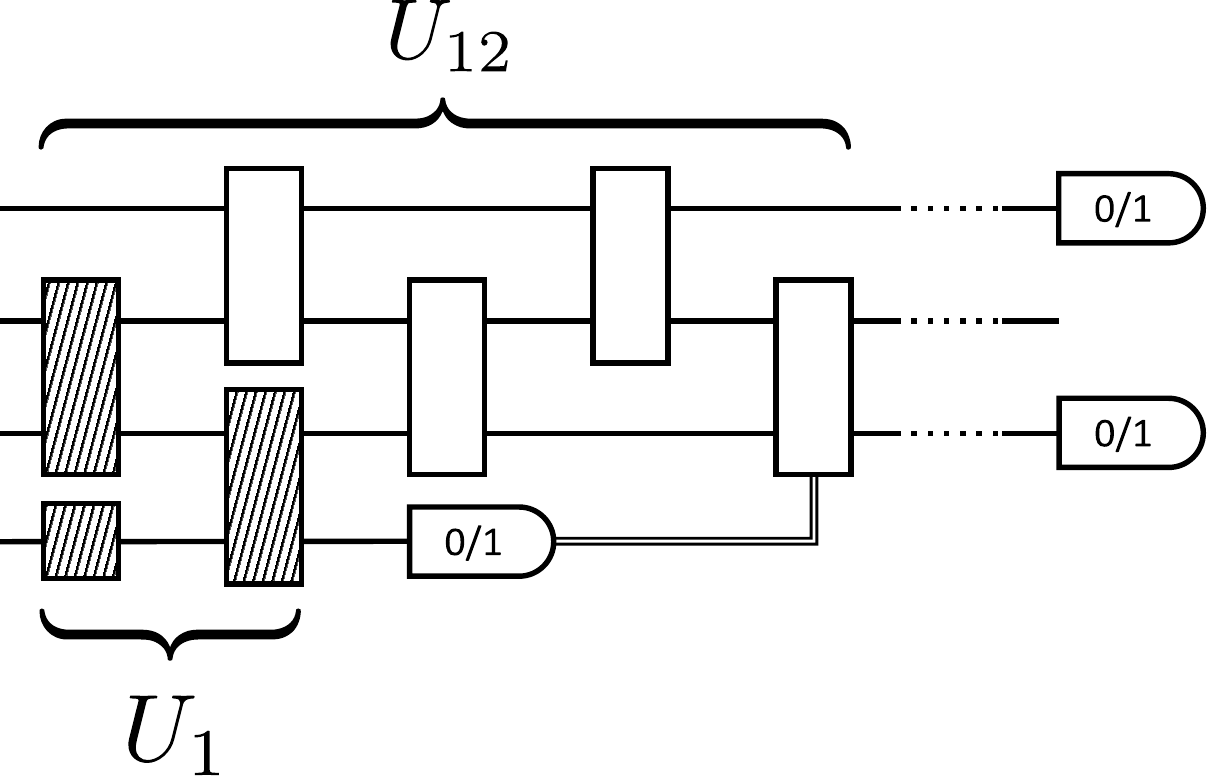}
\caption{Sketch of a MG circuit involving adaptive measurements. By $U_1$, we denote the part of the circuit needed up to the first adaptive measurement (shaded MGs) and by $U_{12}$ we denote the unitary corresponding to the gates needed up to the next meassurement (all the gates shown). $U_{12}$ depends on the outcome of the first adaptive measurement.
As explained in the main text, the strategy to simulate such a circuit is the following. First, simulate only the part up to the first adaptive measurement (shaded MGs); sample an outcome; replace the first adaptive measurement by an projector onto the obtained outcome; simulate the circuit up to the next measurement, and so on.}
\label{fig:adaptive}
\end{figure}
The next step is to treat the possibility of adaptive measurements in the computational basis~\cite{TeDi02,Br16}. We consider each adaptive measurement to be a single line computational basis measurement. Let us denote the sequence of lines that are adaptively measured  and lines measured at the end of the computation respectively by $J=(j_1, j_2, \ldots)$ and $I = (i_1, i_2, \ldots)$. Let us now determine the joint probability of obtaining a given bit string of intermediate measurement outcomes $y=y_1 \ldots y_{|J|}$ and the final measurement outcome string $x=x_1 \ldots x_{|I|}$. The expression is obtained by replacing all measurements by projectors which are then expressed in terms of the fermionic creation and annihilation operators for each specified outcome using Eq.~(\ref{projects}) as before. Now, the order of the measurements and gates must be taken into account. 
Introduce the notations
\begin{gather*}
 \Pi_{i_1\ldots i_{|I|}}(x)=\Pi_{i_1}(x_1)\ldots \Pi_{i_{|I|}}(x_{|I|}) \\
 A_{a_1b_1, \ldots ,a_{|I|}b_{|I|}}(x) = A_{a_1b_1}(x_1)\ldots A_{a_{|I|}b_{|I|}}(x_{|I|}).
 \end{gather*}

In the case of a single adaptive measurement (i.e. $|J|=1$ and $y=y_1$) we obtain
\begin{align}
p(x, y_1) &=  \\ 
\bra{\psi} U_1^\dagger \Pi_{j_1}(y_1) U_2^\dagger 
\Pi_{i_1\ldots i_{|I|}}(x) \nonumber
&  U_2 \Pi_{j_1}(y_1) U_1 \ket{\psi},
\end{align}
where $U_1$ is the unitary corresponding to all the gates that need to be applied up to the first adaptive measurement, $U_2$ is the unitary of all the gates needed between the first and the next measurement (in the present case of $|J|=1$ so $U_2$ comprises of all the remaining gates) etc. 
Note also that in an adaptive circuit, $U_2$ generally depends on the specified value $y_1$. However, in order to simplify notation, we will not explicitly label such dependences, here and in the following. Let us now additionally introduce the notation $U_1$, $U_{12}$, etc, where $U_1$ is as before, $U_{12}$ corresponds to all the gates from the beginning of the circuit to the second adaptive measurement, and so on (see Figure \ref{fig:adaptive}). Again, $U_{12}$ (and all following unitaries) will generally depend on the outcome of the previous adaptive measurements. We denote the $T$-matrices (from Eq.~(\ref{equ:lin})) corresponding to the respective circuit parts by $T^1, T^{1 2}, $ etc. The necessity of this notation results from the necessity of introducing $U_1 {U_1}^\dagger$, $U_{12} {U_{12}}^\dagger$, etc. at the appropriate places in Eq.~(\theequation) \cite{TeDi02}, in order to make use of Eq.~(\ref{equ:lin}). We display the resulting expression for the joint probability for the present simple case of a single adaptive measurement ($|J|=1$) in Eq.~(\ref{eq:jointprob}) below, as well as the general case of an arbitrary number of adaptive measurements \cite{TeDi02} in Eq.~(\ref{eq:jointprobmany}).   
 
\begin{widetext}
\begin{align}
\label{eq:jointprob}
p(x,y_1) = \sum_{\substack{a_1,b_1,f_1,g_1,\\d_1,e_1, \ldots, d_{|I|},e_{|I|}}}& T^1_{j_1,a_1} T^{1*}_{j_1,b_1}  T^{12}_{i_1,d_1} T^{12*}_{i_1,e_1} \ldots T^{12}_{i_{|I|},d_{|I|}} T^{12*}_{i_{|I|},e_{|I|}}   T^1_{j_1,f_1} T^{1*}_{j_1,g_1} \nonumber\\
&\qquad\times \bra{\psi} A_{a_1b_1}(y_1)  A_{d_1e_1,\ldots  , d_{|I|}e_{|I|}} (x) A_{f_1g_1}(y_1) \ket{\psi},\\
\label{eq:jointprobmany}
p(x,y) = \sum_{\substack{a_1,b_1,f_1,g_1,\ldots,\\ a_{|J|},b_{|J|},f_{|J|},g_{|J|},\\d_1,e_1, \ldots, d_{|I|},e_{|I|}}}    
 &T^1_{j_1,a_1} T^{1*}_{j_1,b_1} \ldots T^{1 2 \ldots|J|}_{j_{|J|},a_{|J|}} T^{1 2 \ldots |J|*}_{j_{|J|},b_{|J|}} T^{1 2 \ldots |J| + 1}_{i_1,d_1}T^{1 2 \ldots |J|+1*}_{i_1,e_1} \ldots T^{1 2 \ldots |J|+1}_{i_{|I|},d_{|I|}} T^{12 \ldots |J|+1*}_{i_{|J|},e_{|J|}}   \nonumber\\ 
 & \qquad \times T^{1 2 \ldots |J|}_{j_{|J|},f_{|J|}} T^{1 2 \ldots |J|*}_{j_{|J|},g_{|J|}} \ldots T^1_{j_1,f_1} T^{1*}_{j_1,g_1}\nonumber \\ &\qquad\times \bra{\psi} 
 A_{a_1b_1,\ldots ,a_{|J|}b_{|J|}}(y) A_{d_1e_1,\ldots  , d_{|I|}e_{|I|}} (x) 
  A_{a_{|J|}b_{|J|},\ldots ,a_1b_1}(y)
  \ket{\psi}.
\end{align}
\end{widetext}

 What we are actually interested in is the probability distribution corresponding to only the final measurement $p(x)$. 
Considering again the case $|J|=1$, the probability $p(x,y_1)$ is the joint probability of observing the intermediate measurement outcome $y_1$ and the final measurement outcome $x$.
 But introducing $p(x | y_1 )$, the conditional probability of observing $x$ given that the outcome of the intermediate measurement was $y_1$, we have that
\begin{align}
p(x) = \sum_{y_1} p(x | y_1 ) p(y_1).
\end{align}
Similarly, in case of $|J|$ intermediate measurements we have
\begin{align}
&p(x)  = \sum_{y_1, \ldots, y_{|J|}} p(x, y_{|J|}, \ldots, y_1) \nonumber\\
& \qquad = \sum_{y_1, \ldots, y_{|J|}}  p(x | y_{|J|}, \ldots, y_1)   \nonumber \\
& \qquad \qquad  \times   p(y_{|J|} | y_{|J|-1}, \ldots, y_1) \ldots    p(y_2 | y_1) p(y_1),
\end{align}
where we have iteratively expressed $p(x)$ in terms of the conditional and the marginal probability distributions.

With this, we are now ready to recall the method to weakly simulate adaptive MG circuits with computational basis input \cite{TeDi02}. The method is based on two key ingredients. First, under certain conditions which we discuss later on, expressions of the form in Eq.~(\ref{eq:jointprobmany}) may be efficiently computed. Second, a sample from the probability distribution $p(x)$ may be obtained through iteratively sampling marginal distributions as in Eq.~(\theequation). The procedure to do so is the following \cite{TeDi02}: (1) the circuit up to the first intermediate measurement is considered and the probabilities for outcomes $y_1$, $p(y_1)$ are computed; (2) one classically samples one outcome $y_1$ from $p(y_1)$ and fixes this outcome for the rest of the procedure; (3) then, $p(y_2 | y_1)$ is calculated using $p(y_2,y_1)/p(y_1)$, where the numerator can be calculated by evaluating Eq.~(\ref{eq:jointprobmany}) and the denominator has been determined previously; (4) an outcome of the second measurement $y_2$ is classically sampled from $p(y_2 | y_1)$; (5) the preceding steps are repeated until the final measurement outcome $x$ is sampled.

Let us now recall why Eq.~(\ref{eq:jointprobmany}) may be calculated efficiently in case of computational basis input \cite{TeDi02}. It is clear that the number of summands in Eq.~(\ref{eq:jointprobmany}) is exponential in the number of intermediate measurements $|J|$ as well as in the number of final measurements $|I|$.
Thus, in general it is impossible to evaluate all terms in the sum individually. However, as before, the sum may be reexpressed as the Pfaffian of an antisymmetric square matrix $O$, which here is of dimension $2(l+|I| + 2 |J|)$, where $l$ is the Hamming weight of the input string. Recall also that $|I|$ is the number of final measurements and that $|J|$ is the number of adaptive measurements. For details on constructing $O$ in this case, in particular, for the required counterpart of Table \ref{tab:lookup1qubit}, we refer the reader to \cite{TeDi02}. Note that the technique remains efficient for $\mathcal{O}(\operatorname{poly}(n))$ adaptive measurements.

\subsubsection{Comparing simulation techniques and extensions} 
\label{sec:hadamardgadget}

The classical simulation techniques introduced in \cite{JoMi08} can be straightforwardly generalised to input states that contain (arbitrarily many) groups of qubits that are initialised in an arbitrary entangled state, as long as the size of each group is at most $\mathcal{O}(\log n)$. We will elaborate on that in Section \ref{sec:clsim}. In this context it does not matter how the entangled states are distributed over the lines, in particular, the lines with entangled states need not be adjacent. Interestingly, the second technique described above (and as presented in (\cite{TeDi02}) cannot provide efficient classical simulation in this scenario of more general input states.

The method of efficiently simulating adaptive measurements introduced in \cite{TeDi02} has been generalised to apply to an arbitrary product state input in \cite{Br16}. It relies on the efficient generation of any product state from the input state $\ket{0}^{\otimes n} \ket{+}$ by a MG circuit, and then showing that adaptive MG circuits on the ``almost-computational-basis" state $\ket{0}^{\otimes n} \ket{+}$ remain efficiently classically simulable (by considering the two superposition components of $\ket{+}$ separately). We will use this technique later in our work and briefly review it here. The reduction is achieved by making use of the so-called Hadamard gadget, which allows one to transform $\ket{0}^{\otimes n} \ket{+}$ into any desired product state using a MG circuit. The idea is to apply $G(H,H)$ to the auxiliary line, initialised in the state $\ket{+}$, and its n.n. target line. This leaves the auxiliary line unchanged as $\ket{+}$ and acts as a Hadamard gate on the target line. Together with single qubit phase gates, which are matchgates, this allows for the construction of an arbitrary single qubit unitary on a target line. Using fermionic swaps, $fSWAP = G(Z,X)$, the resulting state is then swapped from the target line to the required position. Fermionic swaps act identically to the conventional swap gate when any of the two input states is $\ket{0}$. Hence, to generate an arbitrary input state starting with $\ket{0}^{\otimes n} \ket{+}$, one can proceed as follows \cite{Br16}. First, $\ket{\phi_1}$ is created on line $n$, then $n-1$ fermionic swaps are used to swap the created state through the first $n-1$ lines (which are all in $\ket{0}$). The overall state is now $\ket{\phi_1}\ket{0}^{\otimes n-1} \ket{+}$. Then, $\ket{\phi_2}$ is created on line $n$ and swapped into position 2 using $n-2$ fermionic swaps and so on.

\subsection{Magic states for MG circuits}
\label{sec:magic4}
The notion of \emph{magic state} was first introduced in  \cite{BrKi05} in the context of Clifford circuits. Extending Clifford gates with the $T$ gate gives a universal gate set. But instead of enlarging the gate set, one can also consider allowing more general input states (called magic states) and adaptive measurements to implement a new gate by using only previously available gates. Specifically, for the $T$ gate this is achieved by the so-called $T$-gadget, a small adaptive Clifford circuit  that consumes one copy of the magic state $\ket{A} = 1/\sqrt{2} \left(\ket{0} + e^{i \pi/4} \ket{1}\right)$ as well as one adaptive measurement in the computational basis to allow deterministic implementation of one $T$-gate \cite{BrKi05}.
However, neither copies of the magic state, nor adaptive measurements in the computational basis on their own give rise to universal quantum computation. On the contrary, these situations have been proven to be classically efficiently simulable~\cite{JoVa14}.

In earlier work, we studied the notion of magic states in the context of MG circuits in~\cite{HeJo19}. It turns out that its definition is more subtle there compared to the context of Clifford circuits, due to the n.n. condition for MG actions and non-availability of the SWAP operation. Also since MG circuits remain classically simulable even if arbitrary product input states and adaptive measurements in the computational basis are allowed \cite{Br16} we see that  single-qubit states cannot be magic states for MG circuits.
In \cite{HeJo19}, we introduced the following natural definition of magic states for MG circuits:  if $R$ is a resourceful $k$-qubit gate (i.e. giving universal computation with MGs), we say that an $m$-qubit state $\ket{M}$ is a {\em magic state} for $R$ if\\
(M1): there is a circuit $C$ of MGs and adaptive measurements such that for any $k$-qubit state $\ket{\alpha}$, $C$ maps $\ket{\alpha}\ket{M}$ to $(R\ket{\alpha})\ket{\tilde{M}}$ (where $\ket{\tilde{M}}$ is any state, that may depend on the intermediate measurement outcomes too. However, it may not depend on $\ket{\alpha}$.)\\
(M2): The state $\ket{M}$ can be swapped through arbitrary states using only MGs.\\
(Actually, a slightly more general version of (M1), tolerating a small error, is used \cite{HeJo19} but for transparency we will not reproduce it here).

We showed that (M2) is fulfilled iff the state is fermionic, and we derived the following characterisation of pure magic states. 
\begin{theorem}[\cite{HeJo19}]
\label{theo:allnongaussianmagic}
Any pure fermionic state which is non--Gaussian is a magic state for MG computations.
\end{theorem}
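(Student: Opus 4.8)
Since $\ket{M}$ is fermionic by hypothesis, property (M2) holds automatically, so the whole task reduces to establishing (M1): I must exhibit a circuit of MGs and adaptive computational-basis measurements that consumes $\ket{M}$ and implements some fixed resourceful gate $R$ on an arbitrary input $\ket{\alpha}$, with the residual state independent of $\ket{\alpha}$. By the universality of MGs supplemented with the SWAP gate \cite{JoMi08}, or with any non-matchgate parity-preserving two-qubit gate \cite{BrGa11}, it suffices to take $R$ to be one such gate, and I would take $R=\mathrm{SWAP}$. The plan is to realise $R$ by a state-injection (gate-teleportation) gadget: route the input lines next to $\ket{M}$ using fermionic SWAPs (legitimate since, being fermionic, $\ket{M}$ satisfies (M2)), apply a Gaussian (MG) unitary $W$ coupling input and resource, measure a subset of the resource lines in the computational basis via Eq.~(\ref{projects}), and apply an outcome-dependent MG correction. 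The design requirement is that \emph{every} measurement branch implement the same non-matchgate $R$, with all branch-to-branch differences absorbed into the free MG corrections --- exactly as the byproduct operator in the Clifford $T$-gadget is absorbed by an $S$-gate.

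My first technical step would be to bring $\ket{M}$ into a normal form using only MGs. Using that the matchgate group acts transitively on pure Gaussian states of a given parity, together with fSWAP routing, I would strip off the Gaussian-correlated part of $\ket{M}$ and concentrate its non-Gaussian content onto as few modes as possible. Here the stated fact that every fermionic state of three or fewer qubits is Gaussian is structurally decisive: the non-Gaussianity cannot be localised below four modes, so the natural target is a canonical four-mode non-Gaussian ``core'', for which I would take $\ket{\Phi^+}_{13}\ket{\Phi^+}_{24}$. For this core I would verify the SWAP gadget directly in the Majorana picture: conjugate the measurement projectors of Eq.~(\ref{projects}) through the Gaussian gadget unitary with the linearity relation Eq.~(\ref{equ:lin}) (writing $W$ in the form (\ref{eq:mgdecomp})), and check branch by branch that the induced input operation equals SWAP up to an MG-correctable byproduct.

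The main obstacle is the reduction to the core, and in particular its outcome-independence. A computational-basis measurement on a four-mode non-Gaussian state generically collapses it to a three-mode --- hence Gaussian, and useless --- state, so I cannot merely measure resource lines down to four and assume the core survives. The delicate point is therefore to engineer, through the normal-form unitary $W$, that the measured modes are \emph{decoupled} from the non-Gaussian core at the level of the covariance matrix, so that eliminating them by measurement leaves the core's higher-order connected correlations intact in every branch, with every branch differing only by an MG correction. Establishing that such a decoupling exists for an \emph{arbitrary} non-Gaussian $\ket{M}$ --- equivalently, a normal-form theorem for pure fermionic states under Gaussian operations that isolates a fixed non-Gaussian core together with a guarantee that no outcome branch prematurely Gaussianises the resource --- is the crux of the argument, and I expect this all-branches determinism requirement to be the hardest part.
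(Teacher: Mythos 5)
Your proposal correctly disposes of (M2) and correctly identifies the endgame: a teleportation-style SWAP gadget built on the canonical four-qubit core $\ket{\Phi^+}_{13}\ket{\Phi^+}_{24}$ (equivalently, via MGs, the four-qubit GHZ state). This matches the construction sketched in Section~\ref{sec:magic4} (Bell measurements implemented by $G(H,H)$ plus computational-basis measurements, adaptive Pauli corrections, and $G(\pm Z,X)$ swap-outs). Note, however, that the paper does not itself prove Theorem~\ref{theo:allnongaussianmagic}; it imports it from \cite{HeJo19}, and the gadget sketch it gives applies only to the specific state $\ket{M}$, not to an arbitrary non-Gaussian fermionic state.

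The genuine gap is exactly the step you flag as ``the crux'': converting an \emph{arbitrary} pure fermionic non-Gaussian state into the canonical core. You leave this unproven, and the route you propose --- a Gaussian-unitary normal form that decouples the measured modes from a non-Gaussian core so that \emph{every} measurement branch retains the resource --- is both too strong (as you yourself observe, computational-basis measurements generically Gaussianise the state, and no such all-branch-preserving decoupling exists for general $\ket{M}$) and unnecessary. The proof in \cite{HeJo19} does not demand branch-wise determinism for this conversion step. It uses precisely the relaxation the paper mentions parenthetically (``a slightly more general version of (M1), tolerating a small error''): the given state is converted to the four-qubit core \emph{probabilistically} by MGs and adaptive computational-basis measurements --- reducing the number of modes while preserving non-Gaussianity in at least one outcome branch, and filtering the resulting GHZ-type state $a\ket{0000}+b\ket{1111}$ into the balanced one --- with failure branches discarded and the procedure repeated on fresh copies to push the success probability to $1-\epsilon$. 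The same multi-copy spirit is visible in the paper's remark that several copies of $\ket{GHZ_3}$ must be transformed into each single magic state. Determinism is needed, and is achieved via the adaptive Pauli byproduct corrections, only in the final SWAP gadget consuming the canonical $\ket{M}$. Without adopting the error-tolerant (M1) and this probabilistic, multi-copy distillation, your requirement that every measurement branch implement the same gate makes the reduction you need essentially impossible, so the proposal as written cannot be completed.
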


An example of a magic state is the 4--qubit state $\ket{M}=\ket{\Phi^+}_{13}\ket{\Phi^+}_{24}$ (where $\ket{\Phi^+}$ is the standard Bell state), which can also be reversibly transformed via MGs into the 4--qubit GHZ-state \cite{HeJo19,Br06}. Similar to the $T$--gadget for Clifford operations, these states can be used to implement the SWAP operation (known to be resourceful for MGs) on two neighbouring lines as detailed in \cite{HeJo19}; briefly, first, a copy of $\ket{M}$ is swapped in between the two target lines using fSWAPs. Then, two Bell measurements are performed on each of the two target lines and the outer lines of $\ket{M}$, respectively, followed by adaptive Pauli corrections. A Bell measurement may be implemented through the MG $G(H,H)$ followed by computational basis measurements. Finally, the measured lines are swapped to the bottom of the circuit using $G(\pm Z, X)$s. All of this amounts to an adaptive MG circuit that deterministically implements a n.n. SWAP gate.

\section{Classical simulability of MG circuits supplemented with entangled state inputs and/or adaptive measurements}
\label{sec:Mainresults}
\label{sec:clsim}

We have seen that MG circuits are classically simulable (even in the strong sense) in the setting of IN(BITS) and OUT(1), and above in Section \ref{sec:magic4}, that MG circuits supplied with copies of the magic input state $\ket{M}$ and adaptive measurements in the computational basis give rise to universal quantum computation. In this section, we investigate the case of OUT(1) where one has access to either of the resources ADAPT or IN(MAGIC). The case IN(PROD), ADAPT, OUT(1) has already been shown to be weakly simulable in \cite{Br16}. Hence for OUT(1), the only missing case is IN(MAGIC), NONADAPT, OUT(1). We show that this scenario is strongly simulable (Observation \ref{obs:entangledinput}). After that, we analyse the simulation complexity in case one has limited access to the resources. In Sec. \ref{sec:hardnessclassification} we will then consider further settings including those with OUT(MANY), to complete demonstrations of all results given in Figure 1.

\subsection{Many entangled states and/or many adaptive measurements}

The case in which magic states are available, but no adaptive measurements are allowed is classically simulable due to the Heisenberg technique \cite{JoMi08}, as the following observation, which is a straight forward generalisation of \cite{JoMi08}, shows.
\begin{observation}
\label{obs:entangledinput}
A matchgate circuit with product state input and a final single qubit measurement, and no adaptive measurements, which is additionally supplemented by magic state input in form of copies of $\ket{M}$ (and in fact if supplemented by arbitrary entangled states, each involving up to $\mathcal{O}(\log n)$ lines) is classically efficiently simulable in the strong sense.
\end{observation}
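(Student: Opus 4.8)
The plan is to reduce the computation of the single-qubit output probability $p_1$ in Eq.~(\ref{eq:p1}) to a sum of efficiently computable terms, exactly as in the Heisenberg technique of \cite{JoMi08}, but now tracking the entangled blocks carefully. Recall that
\begin{equation*}
p_1 = \sum_{d,e} T_{1,d}\, T^*_{1,e}\, \bra{\psi} c_e c_d \ket{\psi},
\end{equation*}
where $\ket{\psi}$ is the overall input state. The only feature of $\ket{\psi}$ used in \cite{JoMi08} is that it is a product state across the qubit lines, so that each summand $\bra{\psi} c_e c_d \ket{\psi}$ factorises into a product of local single-qubit expectation values (since each Majorana operator $c_\mu = Z\ldots Z P \id \ldots \id$ is a tensor product across lines). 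First I would write $\ket{\psi} = \bigotimes_s \ket{\phi_s}$, where each $\ket{\phi_s}$ is either a single-qubit product-state factor or one of the entangled blocks (a copy of $\ket{M}$, or more generally any block on a set $S_s$ of at most $\mathcal{O}(\log n)$ consecutive or non-consecutive lines).

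The key observation is that the factorisation of $\bra{\psi} c_e c_d \ket{\psi}$ survives in a block-wise form: since $c_e c_d$ is still a tensor product of single-line operators, its expectation value factorises as $\prod_s \bra{\phi_s}(c_e c_d)|_{S_s}\ket{\phi_s}$, where $(c_e c_d)|_{S_s}$ denotes the restriction of $c_e c_d$ to the lines in the block $S_s$. The next step is to bound the cost of evaluating each block factor. For a block on $b_s \le \mathcal{O}(\log n)$ lines, the restricted operator acts on a Hilbert space of dimension $2^{b_s} \le \mathrm{poly}(n)$, so each factor $\bra{\phi_s}(c_e c_d)|_{S_s}\ket{\phi_s}$ is computable in $\mathrm{poly}(n)$ time by direct matrix–vector multiplication. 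Since the sum in Eq.~(\ref{eq:p1}) has only $\mathcal{O}(n^2)$ terms (indexed by $d,e$), and each term is a product of $\mathcal{O}(n)$ block factors each costing $\mathrm{poly}(n)$, the total cost remains $\mathrm{poly}(n)$, giving strong simulation. As noted in Eq.~(\ref{equ:lin}) and the surrounding text, the matrices $R$ and $T$ for the full circuit $U$ are obtained by multiplying the poly-many matrices of the individual MGs, so $T_{1,d}$ is available efficiently; this is unaffected by the input state.

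The main obstacle is the requirement that each entangled block occupy at most $\mathcal{O}(\log n)$ lines: this is precisely what keeps the local dimension $2^{b_s}$ polynomially bounded and so keeps each block factor tractable. It is worth emphasising that, as the Majorana operators carry a string of $Z$'s to the left of the nontrivial Pauli, the operator $c_e c_d$ restricted to a block $S_s$ genuinely factorises from the other blocks only because all these operators are diagonal-in-tensor-product (each $c_\mu$ is a single tensor product of Pauli factors, not a sum), so no cross-block correlations are generated by $c_e c_d$; the entanglement within $\ket{\phi_s}$ stays confined to $S_s$. Thus the distribution of the entangled blocks over the lines is irrelevant, and in particular the blocks need not be adjacent, as claimed. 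The restriction to OUT(1) and NONADAPT is what guarantees we only face the simple two-Majorana sum of Eq.~(\ref{eq:p1}); adaptive or many-line measurements would reintroduce the higher-order operator products of Eqs.~(\ref{eq:jointprob})--(\ref{eq:jointprobmany}), which do not factorise this way and are handled by the separate extensions discussed later.
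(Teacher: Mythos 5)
Your proposal is correct and takes essentially the same route as the paper's own proof: both generalise the Heisenberg technique of \cite{JoMi08} by observing that each summand $\bra{\psi}c_e c_d\ket{\psi}$ in Eq.~(\ref{eq:p1}) factorises over the tensor factors of the input (since $c_e c_d$ is a single tensor product of local operators), so that blocks of at most $\mathcal{O}(\log n)$ qubits, contiguous or not, only require matrices of $\mathrm{poly}(n)$ dimension, keeping the $\mathcal{O}(n^2)$-term sum efficiently computable. The paper compresses this into one line; you have simply made the block-wise factorisation and cost accounting explicit.
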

\begin{proof}
Consider the proof in~\cite{JoMi08}, which we briefly recalled in Section \ref{sec:simreview}, which shows the statement for product state inputs. The proof therein generalises to an entangled input state $\ket{\psi}$, as long as the input state can be written as a tensor product $\ket{\psi} = \ket{\phi_1} \otimes \ldots \otimes \ket{\phi_l}$, where each $\ket{\phi_i}$ involves only up to $\mathcal{O}(\log n)$ (not necessarily contiguous) qubits. In fact, in Eq.~(\ref{eq:p1}) the matrices involved in the computation would be of size $\mathcal{O}(\operatorname{poly} n)$.
\end{proof}

Table \ref{tab:simcost} summarises the cost of simulating MG circuits depending on whether magic state input and/or adaptive measurements in the computational basis are allowed (see also Figure \ref{fig:mg_magic_complexity}). The first two rows in Table \ref{tab:simcost}  are classically simulable for different reasons. The case where adaptive measurements in the computational basis are allowed was addressed in~\cite{Br16}.

\begin{table}[ht]
\begin{tabular}{lll}
\textbf{IN(MAGIC)} & \textbf{ADAPT} & \textbf{Simulation cost}
\\ \hline
Yes                  & No                             & Cl-P~\cite{JoMi08}, Obs. \ref{obs:entangledinput}  \\
No                   & Yes                            & Cl-P~\cite{TeDi02,Br16}     \\
Yes                  & Yes                            & QC-hard (SWAP-gadget)~\cite{HeJo19}
\end{tabular}
\caption{Simulation cost of MG circuits and available resources: adaptive
measurements in the computational basis and magic input states $\ket{M}$ are
available. `Yes' indicates that a polynomial amount of the resource is
available. `No' means that a resource is not available.}
\label{tab:simcost}
\end{table}

\subsection{Limited number of magic states or adaptive measurements}

Here we will show that one can extend the classical simulation techniques to allow `a few' of the resources at the places where we find `no' resources in Table \ref{tab:simcost} retaining efficient classical simulability. 

\begin{theorem}
\label{theo:fewadaptive}
MG circuits on $n$ qubits initialized in an input state $\ket{\psi}$ consisting of arbitrary entangled states $\ket{\psi_i}$ on at most $\mathcal{O}(\log n)$ qubits each, supplemented by $k$ (single-line) adaptive measurements as well as a single final measurement in the computational basis can be (weakly) classically simulated in $\mathcal{O}(\operatorname{poly}(n) (2n)^{4k})$ time.
\end{theorem}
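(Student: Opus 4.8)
The plan is to combine the two simulation strategies reviewed above: the branching/sampling procedure that handles adaptive measurements (steps (1)--(5) following Eq.~(\ref{eq:jointprobmany})) together with the term-by-term Heisenberg evaluation of Observation~\ref{obs:entangledinput} that accommodates the $\mathcal{O}(\log n)$-block entangled input. Both are needed because the Pfaffian/Wick route of \cite{TeDi02}, which normally collapses the exponentially many summands in Eq.~(\ref{eq:jointprobmany}) into a single efficiently computable Pfaffian, is unavailable here: that route requires a computational basis input, whereas our input $\ket{\psi}=\ket{\phi_1}\otimes\cdots\otimes\ket{\phi_l}$ is genuinely entangled within each block. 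I would therefore evaluate the relevant probabilities directly, summand by summand, and the whole content of the theorem is the observation that for a single final measurement and only $k$ adaptive measurements both the number of summands and the cost of each stay under control.

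First I would set up the sampling loop exactly as recalled after Eq.~(\ref{eq:jointprobmany}): compute $p(y_1)$, sample $y_1$, fix the ensuing unitary $U_{12}$ and its matrix $T^{12}$, compute $p(y_2\mid y_1)=p(y_2,y_1)/p(y_1)$, and so on, ending with the single final outcome $x$. Sampling the whole circuit thus requires evaluating at most $k+1$ joint probabilities of the form in Eq.~(\ref{eq:jointprobmany}), the most expensive being the final one with $|J|=k$ and $|I|=1$. The branch-dependent $T$-matrices are obtained, as usual, as ordered products of the poly-many single-gate matrices once the preceding outcomes are fixed.

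Next I would count. As is explicit already in Eq.~(\ref{eq:jointprob}), each adaptive measurement inserts one projector on each side and hence four summation indices ($a_i,b_i$ on the left, $f_i,g_i$ on the right), while the single final measurement contributes only $d_1,e_1$; every index ranges over $\{1,\dots,2n\}$. The total number of summands is therefore $(2n)^{4k+2}=\mathcal{O}(n^2)\,(2n)^{4k}$. Each summand is a product of $T$-matrix entries times an expectation value $\bra{\psi}A_{a_1b_1,\dots}(y)\,A_{d_1e_1}(x)\,A_{f_{|J|}g_{|J|},\dots}(y)\ket{\psi}$ of a product of $4k+2$ Majorana operators. Since every $c_\mu$ is a Pauli string, this product is a single Pauli operator $P=\bigotimes_{j}P_j$ up to a phase $\omega\in\{\pm1,\pm i\}$, both computable in $\mathcal{O}(kn)$ time by multiplying the strings and tracking anticommutation signs. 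Because $\ket{\psi}$ factorizes over the blocks, the expectation factorizes as $\omega\prod_m\bra{\phi_m}\big(\bigotimes_{j\in S_m}P_j\big)\ket{\phi_m}$, and each block factor involves only the $\mathcal{O}(\log n)$ qubits of $\ket{\phi_m}$, i.e.\ a state vector and operator of dimension $2^{\mathcal{O}(\log n)}=\mathrm{poly}(n)$, so it is evaluated in $\mathrm{poly}(n)$ time; this is precisely the mechanism already underlying Observation~\ref{obs:entangledinput}. Each summand thus costs $\mathrm{poly}(n)$, the dominant probability costs $\mathrm{poly}(n)\,(2n)^{4k}$, and multiplying by the $\mathcal{O}(k)$ probabilities of the sampling loop leaves the bound $\mathcal{O}(\mathrm{poly}(n)(2n)^{4k})$ intact.

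The one genuine point to get right — the \emph{main obstacle} — is establishing that the term-by-term Heisenberg evaluation survives the presence of adaptive measurements at all. This amounts to verifying that each summand still factorizes blockwise after the non-computational-basis input is fed through the $a\mapsto\sum Tc$ substitutions, i.e.\ that the operator sandwiched between the two copies of $\ket{\psi}$ remains a single Pauli string whose blockwise expectation is cheap; the $Z$-string tails of the Majorana operators mean one must track the global phase and the per-block Pauli content carefully, but this is routine bookkeeping. Everything else — exactness of the conditional probabilities (so the sampling is exact rather than approximate) and efficient assembly of the branch-dependent $T$-matrices — follows directly from the machinery of \cite{TeDi02,JoMi08} recalled above.
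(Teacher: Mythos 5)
Your proposal is correct and follows essentially the same route as the paper's proof: term-by-term evaluation of Eq.~(\ref{eq:jointprobmany}) with the $(2n)^{4k+2}$ summand count, blockwise factorization of each summand's expectation value over the $\mathcal{O}(\log n)$-qubit input blocks (giving $\operatorname{poly}(n)$ cost per summand), and the iterative sampling loop of $k+1$ simulations whose overhead is absorbed into the $\operatorname{poly}(n)$ factor. Your elaboration of the Pauli-string structure and phase bookkeeping simply makes explicit what the paper's proof states more tersely.
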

\begin{proof}
Consider the expression in Eq.~\eqref{eq:jointprobmany}. It is clear that in case of an input state as described in the theorem, evaluating a single summand involves multiplication of $O(n)$-dimensional vectors and a matrix which is a tensor product of $\mathcal{O}(\log n)$ Pauli matrices and thus takes $\mathcal{O}(\operatorname{poly}(n))$ time. Now note that the number of summation variables occurring in Eq.~\eqref{eq:jointprob} is given by two times the number of finally measured lines (here, 1) plus $4k$, and each of the variables runs from $0$ to $2n-1$. Thus, there are $(2n)^{4k+2}$ summands. In order to obtain a sample from the final measurement, actually $k+1$ simulations are performed, where the $i$th simulation simulates the circuit up to the $i$th measurement (cf. Section \ref{sec:adaptive}). However, as $k$ is bounded by $\operatorname{poly}(n)$ (recalling that all our circuits are poly-sized), one thus obtains a total simulation cost of $\mathcal{O}(\text{poly}(n) (2n)^{4k})$, which proves the theorem.
\end{proof}

Note that the simulation in Theorem \ref{theo:fewadaptive} is efficient if we only have up to a constant number of intermediate measurements, $k$. Moreover, $\mathcal{O}(\log n)$ adaptive measurements in the computational basis lead to a quasi-polynomial runtime of the classical simulation algorithm.

Our next result establishes an upper bound on the classical simulation cost of the setting of poly-many adaptive measurements and entangled inputs. Note that the adaptive setting considered here encompasses OUT(1) and OUT(MANY).

\begin{theorem}
\label{theo:fewentangled}
MG circuits with product state input on $n$ lines, supplemented with an arbitrary entangled input state on $k$ lines, and poly$(n)$ many adaptive measurements in the computational basis, can be (weakly) classically simulated in $\mathcal{O}(\text{poly}(n) 2^{2k})$ time.
\end{theorem}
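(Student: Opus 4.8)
The plan is to combine the adaptive-measurement simulation of \cite{TeDi02,Br16} with a brute-force expansion of the $k$ entangled lines into the computational basis. First I would dispose of the product-state part exactly as in \cite{Br16}: via the Hadamard gadget the product input on the $n$ lines is generated from $\ket{0}^{\otimes n}\ket{+}$ by an enlarged MG circuit, and the single auxiliary $\ket{+}$ is handled by treating its two computational-basis components separately. After this reduction the only non-computational-basis part of the input is the arbitrary entangled state $\ket{\chi}$ living on the $k$ distinguished lines, so that it suffices to show how the Pfaffian technique copes with this residual block.

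Next I would write $\ket{\chi}=\sum_{w\in\{0,1\}^k}\alpha_w\ket{w}$, so that the corresponding input projector is $\ketbra{\chi}=\sum_{w,w'}\alpha_w^*\,\alpha_{w'}\,\ket{w'}\bra{w}$, a sum of $2^{2k}$ terms. Substituting this into the joint-probability expression Eq.~(\ref{eq:jointprobmany}) turns every probability $p(x,y)$ that must be evaluated during the iterative marginal sampling of Section~\ref{sec:adaptive} into a sum of $2^{2k}$ contributions, each weighted by $\alpha_w^*\alpha_{w'}$ and of the form $\bra{w}(\cdots)\ket{w'}$ on the entangled lines (while all other lines carry a fixed computational-basis state inherited from the Br16 reduction).

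Each such contribution is precisely of the type evaluated by the Wick/Pfaffian method of \cite{TeDi02}: one expresses the ket $\ket{w'}$ and the bra $\bra{w}$ as products of input Majorana operators $c_{2p-1}$ acting on $\ket{\vec{0}}$ (using, in general, different bit strings on the two sides), so that the whole contribution becomes a vacuum expectation of a product of Majorana operators, hence the Pfaffian of an efficiently constructible antisymmetric matrix assembled from the lookup Table~\ref{tab:lookup1qubit}. Terms with $|w|\not\equiv|w'|$ in parity vanish automatically, since every intervening factor $A_{de}$ is a product of two Majoranas and hence parity-preserving, consistent with the Pfaffian of an odd-dimensional block being zero. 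Summing the $2^{2k}$ Pfaffians reproduces the genuine (real, non-negative) probability, and the iterative sampling procedure then delivers a weak simulation. As there are $\mathrm{poly}(n)$ measurement outcomes to sample, and each required marginal or conditional costs $2^{2k}$ Pfaffian evaluations of $\mathrm{poly}(n)$ cost apiece, the total running time is $\mathcal{O}(\mathrm{poly}(n)\,2^{2k})$, as claimed.

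The main obstacle I expect is bookkeeping rather than conceptual: one must verify that the Pfaffian construction of \cite{TeDi02}, originally stated for a \emph{diagonal} input $\bra{w}(\cdots)\ket{w}$, extends verbatim to the off-diagonal terms $\bra{w}(\cdots)\ket{w'}$ with $w\neq w'$. Concretely this means allowing the input-Majorana factors on the bra side and on the ket side to correspond to \emph{different} strings, and checking that the $c_{2p_\alpha-1}$-type rows and columns of Table~\ref{tab:lookup1qubit} (the $\delta_{\alpha,\beta}$ and $(HT^\dagger)$ entries) are assigned correctly in this case. Once this mild generalisation is confirmed, the complexity estimate is immediate.
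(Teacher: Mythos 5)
Your proposal is correct and follows essentially the same route as the paper's own proof: reduction via the Hadamard gadget of \cite{Br16} to the input $\ket{0}^{\otimes n}\ket{+}\ket{\chi}$, expansion of the non-computational-basis block into $\mathcal{O}(2^{2k})$ cross terms $\bra{w}(\cdots)\ket{w'}$, evaluation of each term (including the off-diagonal $w\neq w'$ ones) as a Pfaffian of an efficiently constructible antisymmetric matrix, and the same counting of Pfaffian evaluations per sampled measurement. The ``mild generalisation'' to $w \neq w'$ that you flag as the main obstacle is exactly what the paper carries out explicitly via its generalised lookup table (Table~\ref{tab:lookup}), including the parity argument you mention, so nothing further is missing.
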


\begin{proof}

\begin{table*}[t]
\begin{tabular}{clllllll}
\hline
\multicolumn{1}{l}{} & \multicolumn{7}{c}{j}               \\ \hline
\multicolumn{1}{l}{} &                        & $c_{2q_\beta-1}$ &  $c_{(a/f)_1}$                                                                      &  $c_{(b/g)_1}$                                                                               &  $c_{d_\beta}$                                                                 &  $c_{e_\beta}$                                                                         &  $c_{2p_\beta-1}$  \\ \hline
\multirow[v]{6}{*}[-18pt]{i}& $c_{2q_\alpha-1}$& 0                         &  $\left(H{T^1}^T\right)_{2q_\alpha-1, j_1}$                  &  $\left(H{T^1}^\dagger\right)_{2q_\alpha-1, j_1}$         &  $\left(H{T^{12}}^T\right)_{2q_\alpha-1, i_\beta}$                &  $\left(H{T^{12}}^\dagger\right)_{2q_\alpha-1, i_\beta}$                &  $\delta_{q_\alpha,p_\beta}$ \\
                           &   $c_{(a/f)_1}$     & X                         &  $\left(T^1H{T^1}^T\right)_{j_1, j_1}$                      &  $\left(T^1H{T^1}^\dagger\right)_{j_1, j_1}$                      &  $\left(T^1H{T^{12}}^T\right)_{j_1, i_\beta}$             &  $\left(T^1H{T^{12}}^\dagger\right)_{j_1, i_\beta}$             &  $\left(T^1H\right)_{j_1, 2p_\beta-1}$ \\
                           &   $c_{(b/g)_1}$    & X                         &  $\left({T^1}^*H{T^1}^T\right)_{j_1, j_1}$               &  $\left({T^1}^*H{T^1}^\dagger\right)_{j_1, j_1}$              & $\left({T^1}^*H{T^{12}}^T\right)_{j_1, i_\beta}$      &  $\left({T^1}^*H{T^{12}}^\dagger\right)_{j_1, i_\beta}$       &  $\left({T^1}^*H\right)_{j_1, 2p_\beta-1}$\\
                           &   $c_{d_\alpha}$  & X                         &  $\left(T^{12}H{T^1}^T\right)_{i_\alpha, j_1}$          &  $\left(T^{12}H{T^1}^\dagger\right)_{i_\alpha, j_1}$         &  $\left(T^{12}H{T^{12}}^T\right)_{i_\alpha, i_\beta}$         &  $\left(T^{12}H{T^{12}}^\dagger\right)_{i_\alpha, i_\beta}$         &  $\left(T^{12}H\right)_{i_\alpha, 2p_\beta-1}$ \\
                           &  $c_{e_\alpha}$   & X                         &  $\left({T^{12}}^*H{T^1}^T\right)_{i_\alpha, j_1}$  &  $\left({T^{12}}^*H{T^1}^\dagger\right)_{i_\alpha, j_1}$ &  $\left({T^{12}}^*H{T^{12}}^T\right)_{i_\alpha, i_\beta}$  &  $\left({T^{12}}^*H{T^{12}}^\dagger\right)_{i_\alpha, i_\beta}$  &  $\left({T^{12}}^*H\right)_{i_\alpha, 2p_\beta-1}$\\
                           &  $c_{2p_\alpha-1}$ & X                         &  X                                                                                           &  X                                                                                                     &  X                                                                                    &  X                                                                                             &  0 \\ \cline{2-8}
\end{tabular}
\caption{Generalised lookup table for the construction of matrix elements $O_{i,j}$ for $i<j$, which are required in the proof of Theorem \ref{theo:fewentangled} (based on the technique for simulating MG circuits introduced in \cite{TeDi02}). `X' indicates that such terms do not appear. We use the same notation as in Section \ref{sec:adaptive}, e.g., $T^{12}$ is associated to the gates up to the second measurement (see also Figure \ref{fig:adaptive}). Moreover, $\alpha$ denotes the $\alpha$th qubit which is finally measured (for cases $c_{d_\alpha}$, $c_{e_\alpha}$), or contains `1' in the currently considered computational basis component $w$ (for case $c_{2p_\alpha}$) or $w'$ (for case $c_{2q_\alpha}$), and similarly for $\beta$. 
 Despite the fact that the first column (last row) could be omitted, we give them nevertheless to make it clear that terms of that form either do not appear or do not contribute.}
\label{tab:lookup}
\end{table*}

Let  $\ket{\psi} = \ket{\phi_1} \ldots \ket{\phi_n} \ket{\chi}$ denote the input state to the $(n+k)$-qubit MG circuit, with $\ket{\phi_i}$ being 1-qubit states and $\ket{\chi}$ an arbitrary state on $k$ qubits. As in \cite{Br16} (see also Section \ref{sec:hadamardgadget}), we can construct a matchgate sequence that generates $\ket{\phi_1} \ldots \ket{\phi_n} \ket{\chi}$ from $ \ket{0}^{\otimes n}\ket{+} \ket{\chi}$ and thus reduce the problem to simulating a $(n+k+1)$-qubit MG circuit with input $ \ket{0}^{\otimes n}\ket{+} \ket{\chi}$. Similarly as in \cite{TeDi02}, it is possible to simulate the circuit up to the $i$th intermediate measurement, replace the $i$th intermediate measurement with a projector onto a classically sampled outcome and repeat the procedure in order to simulate up to the $(i+1)$th measurement  (see also Section \ref{sec:simreview}). 

Hence it remains to show that for the input state at hand, Eq.~(\ref{eq:jointprobmany}) may be evaluated within the claimed overall simulation runtime. We first illustrate the evaluation of the mentioned expression for the case of a single adaptive measurement and then generalise the argument to $\mathcal{O}(\operatorname{poly} n)$ adaptive measurements. Let us write the state $\ket{0 \ldots 0}\ket{+}\ket{\chi}$ into the computational basis, $\ket{0 \ldots 0}\ket{+}\ket{\chi} = \sum_{w\in \{0,1\}^{k+1}}  \lambda_w \ket{0 \ldots 0}\ket{w}$. Inserting into Eq.~\eqref{eq:jointprob} leads to
\begin{widetext}
\begin{align}
\label{eq:jointprobcross}
p(x,y_1) &= \sum_{w,w'} \lambda_w \lambda_{w'}^* \sum_{\substack{a_1,b_1,f_1,g_1,\\d_1,e_1, \ldots, d_{|I|},e_{|I|}}} T^1_{j_1,a_1} T^{1*}_{j_1,b_1}  T^{12}_{i_1,d_1}T^{12*}_{i_1,e_1} \ldots T^{12}_{i_{|I|},d_{|I|}} T^{12*}_{i_{|I|},e_{|I|}}  T^1_{j_1,f_1} T^{1*}_{j_1,g_1} \nonumber\\
&\qquad \qquad\qquad\qquad \qquad  \times \bra{\vec{0}}c_{2q_{l'}-1} \ldots c_{2q_1-1} \, A_{a_1 b_1}(y_1)  A_{d_1 e_1, \ldots, d_{|I|} e_{|I|}}(x)  A_{f_1 g_1}(y_1) \,  c_{2p_1-1} \ldots c_{2p_l-1}  \ket{\vec{0}}.
\end{align}
\end{widetext}
Here and in the following we use the same notation as in Section \ref{sec:simreview}.
 Similarly as in \cite{TeDi02}, we have rewritten $\ket{w}$ and $\ket{w'}$ as Majorana operators acting onto $\ket{\vec{0}}$, $\ket{w} = c_{2p_1-1} \ldots c_{2p_l-1}\ket{\vec{0}}$, $\ket{w'} = c_{2q_1-1} \ldots c_{2q_{l'}-1}\ket{\vec{0}}$ (see also Section \ref{sec:simreview}). Note that the $p_i$ and $q_i$ depend on the summation indices $w$ and $w'$. Expressions of the form in Eq.~(\theequation) can be evaluated by individually evaluating each summand of the sum over $w$ and $w'$. 
In \cite{TeDi02}, it has been shown how to efficiently evaluate the summands where $w=w'$  (see also Section \ref{sec:simreview}). The method therein can be straightforwardly  generalised to the cases where $w \neq w'$. Each summand can be rephrased as a Pfaffian of an antisymmetric square matrix $O$ of dimension $l + l' + 2(|I| + 2)$, where $l$ ($l'$) is the Hamming weight of $w$ ($w'$). The matrix $O$ can be easily constructed by going through the pairs $(i,j)$ of Majorana operators ordered as in Eq.~(\theequation) and consulting Table \ref{tab:lookup} in order to read off the matrix entry $O_{i, j}$. In case of $|J|$ adaptive measurements, the matrix $O$ may be similarly constructed and has a dimension of $l + l' + 2(|I| + 2 |J|)$.

Let us now consider the runtime of the algorithm. As the state $\ket{0 \ldots 0}\ket{+}\ket{\chi}$ can be decomposed into at most $2^{k+1}$ computational basis elements, we have to evaluate up to $4 \times  2^{2k}$ $w$-, $w'$-summands within Eq.~(\theequation) and its generalisation for more adaptive measurements. Evaluating each of those summands can be done in $\operatorname{poly}(n)$ time, and moreover, the procedure has to be repeated for each of the $\text{poly}(n)$ many adaptive measurements. Hence, the overall runtime of the simulation algorithm is $\mathcal{O}(\text{poly}(n) 2^{2k})$. Note that a factor $2$ can be saved as matrix elements $\bra{w'}.\ket{w}$ vanish, unless $w$ and $w'$ both have even, or both have odd Hamming weight.
\end{proof}

Clearly, the simulation of MG circuits as in Theorem \ref{theo:fewentangled} remains efficient if up to $k = \mathcal{O}(\operatorname{log}n)$ lines start out in an entangled input state.

\section{Classical simulation Classification of MG computations}
\label{sec:hardnessclassification}

We now complete the classification of classical simulation complexity for MG computations under various conditions on the input state, type of measurements, the number of lines measured and the type of simulation required, as given in Fig. \ref{fig:mg_magic_complexity}.

First, we show that the scenario OUT(1), IN(PROD), ADAPT, STRONG is $\# P$--hard (see Theorem \ref{TH:sharpP}). This implies that in the same setting, but with the more general input IN(MAGIC), simulation is still $\# P$--hard. Similarly the setting with a more general output OUT(MANY) remains $\# P$--hard too. Then we show that the simulation complexity of the NONADAPT case with OUT(MANY) and IN(MAGIC) is also $\# P$--hard (see Theorem \ref{thm:MIMOsharpP}). Finally we prove that considering the previous case but with weak simulation, the existence of an efficient classical simulation would imply that the polynomial hierarchy PH collapses (see Theorem \ref{thm:phcollaps}). We heavily use proof techniques presented in \cite{JoVa14}, which have been used there to prove similar results in the context of Clifford computations, and suitably adapt them to prove the results outlined above.

Let us now show that the scenario OUT(1), IN(PROD), ADAPT, STRONG is $\# P$--hard.
\begin{theorem} \label{TH:sharpP}
Let $\cal A$ be the set of processes defined by adaptive MG circuits with product state inputs, and single bit outputs. Strong classical simulation of $\cal A$ is \#P-hard.
\end{theorem}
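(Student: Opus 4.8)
The plan is to establish \#P-hardness by exhibiting a \#P-complete counting problem whose value can be extracted from a single output probability of an adaptive MG circuit on a product state input, so that a strong classical simulator (which by definition computes such probabilities to polynomially many digits) would solve the counting problem in classical polynomial time. The natural target is the computation of a matrix permanent, or equivalently the counting of satisfying assignments of a Boolean formula, both of which are \#P-complete. Following the strategy used for Clifford circuits in \cite{JoVa14}, I would reduce from the problem of counting the number of solutions of a system encoded in a postselected quantum computation: it is well known that postselected quantum computation (the class PostBQP, or equivalently the ability to read off postselected amplitudes) gives access to \#P-hard quantities, since a postselected universal circuit can encode the acceptance probability of a \#P computation.

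The key observation that makes this work for matchgates is that the two missing ingredients relative to the classically simulable settings, namely the SWAP gadget and postselection, can both be manufactured here. First I would invoke the SWAP gadget of \cite{HeJo19} (reviewed in Section \ref{sec:magic4}): copies of the magic state $\ket{M}$ together with adaptive measurements implement a n.n. SWAP, and MGs plus SWAP form a universal gate set \cite{JoMi08}. However, the theorem restricts the input to IN(PROD), so I cannot simply feed in copies of $\ket{M}$. The route is instead to use the adaptive measurements themselves, together with product state inputs, to prepare the needed entangled magic resources \emph{inside} the circuit, or more directly to use adaptive measurements to realise nondeterministic gates that are rendered effective by postselecting on a particular branch. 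Concretely, I would build an adaptive MG circuit that, on a product (in fact computational-basis-plus-ancilla) input, simulates a universal postselected quantum computation: the adaptive measurement outcomes provide the branching, and by conditioning the stated single output bit on a fixed pattern of intermediate outcomes one effectively postselects. A strong simulator computes the joint probability $p(x,y)$ of the output bit together with the intermediate outcome string, and dividing by the (also computable) probability $p(y)$ of the postselection pattern yields the postselected amplitude to high precision.

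The main steps in order are therefore: (i) fix a \#P-complete problem, say computing $\#\mathrm{SAT}$ or a $0/1$ permanent, and recall the standard encoding of its value into a postselected quantum amplitude of a universal circuit; (ii) compile that universal postselected circuit into MGs plus SWAP using the universality result \cite{JoMi08}, realising each SWAP via the adaptive gadget so that the whole thing becomes an adaptive MG circuit with a product state input; (iii) identify the postselection as conditioning on a designated intermediate-measurement outcome string, and express the target quantity as a ratio of two probabilities, each of which is a single (joint or marginal) output probability of the adaptive MG circuit; and (iv) argue that $\mathrm{poly}(n,m)$-digit strong simulation of these probabilities suffices to recover the \#P quantity exactly, since the relevant probabilities are ratios of integers with a polynomially bounded number of bits, so polynomially many digits pin down the exact value.

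The hard part will be step (iii)–(iv): ensuring that the postselection branch has probability bounded below by an inverse-exponential that is nonetheless large enough that only $\mathrm{poly}(n)$ digits of precision are needed to resolve the ratio, and that the output probability of the single final bit encodes the counting answer cleanly rather than being buried under an intractable normalisation. In the Clifford setting \cite{JoVa14} this is handled by choosing the gadget so that the relevant amplitudes are rational with controlled denominators; I would need the analogous bookkeeping for the MG/SWAP-gadget compilation, verifying that the probabilities arising from the adaptive matchgate circuit have denominators that are powers of two (or otherwise polynomially bounded in bit-length), so that strong simulation to $\mathrm{poly}(n)$ precision determines them exactly and hence determines the \#P value. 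Closing this precision-versus-postselection-probability gap, while keeping the entire construction expressible as adaptive MGs on a genuine product-state input, is where the real care is required; the universality and gadget machinery invoked above is by contrast essentially a direct appeal to \cite{JoMi08} and \cite{HeJo19}.
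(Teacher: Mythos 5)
Your proposal has a fatal structural flaw: the construction you describe in steps (ii)--(iii) cannot exist unless the polynomial hierarchy collapses. You propose to make adaptive MG circuits with product-state inputs simulate universal \emph{postselected} quantum computation, by preparing the magic/SWAP resources ``inside'' the circuit using adaptive computational-basis measurements. But this class of circuits is weakly classically simulable \cite{Br16}, and by the argument of \cite{BrJo11} (used in the paper's Theorem \ref{thm:phcollaps}), any weakly simulable class whose postselected version is quantum universal forces PH to collapse to its third level. Equivalently: if adaptive MG circuits with product inputs and computational-basis measurements could manufacture the entangled magic state $\ket{M}$ (even with postselection), the whole setting would not be simulable, contradicting \cite{Br16}. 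The paper explicitly warns against exactly this reasoning in the discussion following Observation \ref{obs:adaptplus}: supplying $\ket{+}$ states (or any product states) plus adaptivity does \emph{not} substitute for the ability to produce such states mid-circuit. Your extraction mechanism is also self-defeating on its own terms: in this setting the joint probability $p(x,y)$ of any \emph{specific} intermediate-outcome string $y$ together with the final bit $x$ is already efficiently computable classically, as a chain of Pfaffian-computable conditionals (Eq.~(\ref{eq:jointprobmany}) and \cite{Br16}), and so is $p(y)$; hence the ratio $p(x,y)/p(y)$ you want to read off lies in FP and cannot encode a \#P-hard quantity. The hardness of \emph{strong} simulation in this setting resides entirely in computing marginals over the exponentially many intermediate-outcome branches, which your reduction never touches.

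The paper's proof exploits precisely that marginalisation, and needs no universality, no postselection, and no magic states. Following Theorem 2 of \cite{JoVa14}: any Boolean function $f$ is computed by a Toffoli-plus-$X$ circuit acting on computational basis states; a Toffoli on basis states is realised inside an adaptive MG circuit by measuring both control lines in the computational basis and conditionally applying $X$ to the target, where the single-qubit $X$ is obtained from the matchgate $X_j\otimes X_{j+1}$ together with an ancilla line $\ket{0}$ and an $\mathcal{O}(n)$ ladder of such gates to dump the unwanted second $X$. A uniformly random input string is produced by taking the product input $\ket{+}^{\otimes n}$ and measuring every line. The final output bit then equals $1$ with probability exactly $\#f/2^n$ --- a marginal over the $2^n$ intermediate branches --- so a strong simulator, which by definition returns this marginal to poly-many digits, determines $\#f$ exactly. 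That is the \#P-hardness, and it is the averaging over exponentially many adaptive branches, not any embedded universal quantum computation, that carries the whole weight of the argument.
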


\begin{proof}
The strategy employed in Theorem 2 of~\cite{JoVa14} to prove \#P-hardness of strong simulation of adaptive Clifford circuits with computational basis input and single bit outputs is applicable here with several key differences. In what follows, we will emphasise the differences between~\cite{JoVa14} and MG circuits.

Any Boolean function $f$ from $n$ bits to one bit can be implemented using a sequence of Toffoli and $X$ gates together with further ancilla bits initialised to 0. Hence, for input register restricted to being a computational basis state $\ket{x}$ (and omitting any ancillas used), the map ${\cal A}_f: |x\rangle |0\rangle \mapsto |x\rangle|f(x)\rangle$ can be realised by such a gate sequence. Given the fact that one can strongly simulate any circuit in the set $\cal A$, we show that then, one can compute the number of input strings for which $f$ evaluates to 1. In the Clifford case the sequence of Toffoli + $X$ gates can be realised as follows. There, a Toffoli gate can be implemented by measuring one of the control bits and conditionally applying a C-NOT to the second control bit and the target bit. For MG circuits, a similar construction is possible. In order to implement a Toffoli gate on a computational basis state, both of the control lines are measured in the computational basis and a conditional $X$ is implemented on the target line. In MG circuits, it is only possible to apply $X$ on pairs of neighbouring lines, i.e., $X_i \otimes X_{i+1}$, which is easily checked to be a MG. The application of a single $X$ on a line can then be achieved by introducing an auxiliary line $\ket{0}$ and an $\mathcal{O}(n)$ length ladder of $X_j \otimes X_{j+1}$ gates in order to get rid of the second (undesired) $X$ until it acts on the auxiliary line.

A second ingredient of the proof in \cite{JoVa14} is to provide a uniformly random input string. There, this is achieved by starting from $\ket{0}^{\otimes n}$, applying a Hadamard on all lines and then measuring all lines. The Hadamard is not a matchgate, however, as we are allowing product state inputs we can simply directly start with $\ket{+}^{\otimes n}$ and measure all lines. 

The whole process of running ${\cal A}_f$ on a random input $x$ is thus in the set $\cal A$.
Strong simulation is now tantamount to calculating of the number of input strings for which $f$ evaluates to one, $\#f$ viz. measuring the second register after the application of ${\cal A}_f$ yields outcome one with a probability of $\#f /2^n$ whose value is provided by the strong simulation. Hence, strong simulation is \#P-hard, as in \cite{JoVa14}.
\end{proof}

Now, we show that also the task of strongly simulating the NONADAPT case with OUT(MANY) and IN(MAGIC) is $\# P$--hard.

\begin{theorem}
\label{thm:MIMOsharpP}
Let ${\cal A}$ be a set of processes defined by non-adaptive MG circuits with magic state and product state inputs and multiple bit outputs. Strong classical simulation of ${\cal A}$ is \#P-hard.
\end{theorem}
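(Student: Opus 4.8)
The plan is to reduce the problem to Theorem~\ref{TH:sharpP}, which already establishes $\#P$-hardness of the adaptive setting OUT(1), IN(PROD), STRONG. The key idea is that the magic state input IN(MAGIC) together with the availability of OUT(MANY) allows us to \emph{simulate} the adaptive measurements of that setting, even though the present scenario is restricted to NONADAPT. Recall from Section~\ref{sec:magic4} that copies of the magic state $\ket{M}$, together with adaptive measurements in the computational basis, permit the SWAP-gadget implementation of a n.n.\ SWAP gate, which is resourceful for MGs. The obstacle is that the SWAP-gadget as described is \emph{adaptive}: it requires Pauli corrections conditioned on intermediate Bell-measurement outcomes. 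To fit the NONADAPT constraint we must remove this adaptivity.

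First I would invoke the standard principle of deferred measurement, adapted to the MG setting: any adaptive correction in the computational basis can be postponed to the end of the circuit provided we keep the (now non-adaptive) measurement outcomes and post-select classically on them. Concretely, in the NONADAPT, OUT(MANY) setting we measure \emph{all} the relevant lines at the end, including those that would have triggered adaptive corrections in the adaptive construction. A strong simulator then gives us, for any designated bit string $x$, the joint probability $p(x)$ of the full multi-line outcome. By summing or conditioning over the outcome bits corresponding to the gadget's internal measurements — which is a classical postprocessing step using only the output probabilities the strong simulator provides — we recover exactly the conditional output distribution of the adaptive circuit of Theorem~\ref{TH:sharpP}. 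Since this postprocessing involves only a fixed number of gadget measurements per deferred correction, and the circuit in Theorem~\ref{TH:sharpP} uses poly-many such corrections, the bookkeeping remains polynomial.

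Next I would assemble the explicit reduction. Given a Boolean function $f$ as in Theorem~\ref{TH:sharpP}, build the circuit ${\cal A}_f$ that computes $\#f$, but now realise each adaptive SWAP (used to enact the Toffoli/$X$ ladder and the randomisation over inputs) via a non-adaptive magic-state gadget with deferred corrections. The input state is then a product of $\ket{+}$ states (for the random-input register) plus the requisite copies of $\ket{M}$, which is precisely IN(MAGIC). All measurements are pushed to the end, giving OUT(MANY), and the simulation required is STRONG. A single strong-simulation query, followed by the classical conditioning described above, yields the probability $\#f/2^n$, hence computes $\#f$.

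The main obstacle I anticipate is verifying that the deferred-measurement rewriting genuinely stays within the MG gate set and the n.n.\ constraint — one must check that the Pauli corrections being deferred are themselves matchgate-compatible (single $X$ or $Z$ corrections realised via the $X_j X_{j+1}$ ladder trick already used in Theorem~\ref{TH:sharpP}, and $Z$ being directly a MG), and that deferring them does not introduce any non-parity-preserving operation or any non-n.n.\ action. Once this compatibility is confirmed, the rest is a direct transcription of the hardness argument, and $\#P$-hardness follows as in~\cite{JoVa14}.
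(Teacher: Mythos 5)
Your core mechanism coincides with the paper's: take a universal circuit of MGs and SWAPs, replace every SWAP by the magic-state gadget with its adaptive parts stripped out, measure all gadget lines only at the end (OUT(MANY), IN(MAGIC), NONADAPT), and recover the ideal circuit's output probability from the strong simulator as the ratio $\operatorname{prob}(y\,0_{a_1}\ldots 0_{a_{4K}})/\operatorname{prob}(0_{a_1}\ldots 0_{a_{4K}})$, i.e.\ post-selection on the all-zero branch; since this gives strong simulation of universal quantum computation, and hence of $\#f/2^n$ as an output probability (coherent evaluation of $f$ on $\ket{+}^{\otimes n}$), \#P-hardness follows. This is essentially the paper's proof of Theorem \ref{thm:MIMOsharpP}.

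That said, three places in your write-up would fail if read literally, and it is worth seeing how the paper's version avoids them. (i) ``Summing'' over the gadget outcome bits does not work: on any branch where a Pauli correction would have been required, the uncorrected state is corrupted, so marginalising mixes wrong branches into the answer. Only conditioning on the no-correction (all-zero) branch is valid --- and this is acceptable precisely because the simulation demanded is STRONG, so both the joint and the marginal probability in the quotient are available exactly even though the post-selection probability is exponentially small in the number of SWAPs. (ii) Nothing is actually ``deferred'': the Pauli corrections are simply omitted, and post-selection renders them unnecessary. A genuine deferred-correction (Pauli-frame) argument would require commuting Paulis forward through subsequent MG segments, which is not available here --- MGs do not normalize the Pauli group (conjugation preserves the span of quadratics in Majorana operators, not the Pauli group) --- so the ``compatibility check'' you flag as the remaining obstacle cannot be carried out; in the correct construction it is avoided rather than resolved. (iii) Your framing of ``simulating the adaptive measurements of Theorem \ref{TH:sharpP}'' needs care: the control-line measurements used there to enact Toffolis cannot be post-selected, since conditioning on their outcomes biases the distribution over inputs $x$ and no longer computes $\#f$. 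One must first rewrite the $f$-evaluating circuit as a fully \emph{unitary} MG$+$SWAP circuit (possible since that gate set is universal), so that the only measurements being post-selected are the gadget-internal Bell-type ones, which --- conditioned on the good branch --- implement the SWAP unitary on an \emph{arbitrary} input state and therefore do not bias the logical computation.
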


\begin{proof}
We will follow the outline of the proof of Theorem 6 from~\cite{JoVa14}, which is the Clifford counterpart of the present theorem. 
The main idea is to show that if it is possible to efficiently strongly simulate the output of MG circuits associated with ${\cal A}$, then it would also be possible to strongly efficiently simulate universal quantum computations. This, in turn, would allow  calculation of $\#f$ for an arbitrary Boolean function $f$, since $\#f/2^n$ can be realised as an output probability of a suitable quantum computation (evaluation of $f$ on an equal superposition of all its inputs).

It remains to show that strong simulation of universal quantum computation can be reduced to a strong simulation of MG circuits given by $\cal A$.
To this end, consider an arbitrary (universal) quantum circuit $D$ in terms of MGs and SWAP gates. Now, consider a `gadgetised' version of the circuit, $D'$, in which all SWAP gates are replaced by SWAP-gadgets and magic states $\ket{M}$ (see also Section \ref{sec:magic4}) \cite{HeJo19}. Recall that the SWAP-gadget involves adaptive measurements followed by adaptive Pauli corrections and (adaptive) $G(\pm Z, X)$ to swap out auxiliary lines. Instead of this, however, let us omit the intermediate measurements in the gadget as well as the Pauli-corrections. Let us directly use a sequence of $fSWAP = G(+Z, X)$ gates to `swap' the lines in question to the bottom of the circuit. Then, at the end of the computation, these lines are measured in the computational basis. 
In case all the measurements on the auxiliary lines yield outcome 0, the remaining lines of $D'$ are prepared in the output state of $D$. Otherwise, however, this is not necessarily the case.
Let $K$ denote the number of SWAP gates in the circuit and let us denote the additional lines that were added for the $K$ SWAP-gadgets by $a_1, \ldots, a_{4K}$. Then, the output probability of measuring a bit string $y$ on the qubits of interest in the circuit $D$ is given by
\begin{align}
\operatorname{prob}_D(y) &= \operatorname{prob}_{D'}(y | 0_{a_1} \ldots 0_{a_{4K}}) \nonumber \\
				&= \frac{\operatorname{prob}_{D'}(y \ 0_{a_1} \ldots 0_{a_{4K}})}{\operatorname{prob}_{D'}(0_{a_1} \ldots 0_{a_{4K}}) }.
\end{align}
Strong simulability of $D'$ allows the evaluation of both quantities in the quotient in Eq.~(\theequation), and thus strong simulation of $D$. As $D'$ is a computational task as defined by ${\cal A}$, the statement follows.
\end{proof}

For our next result, recall that Observation \ref{obs:entangledinput} asserted that the setting of non-adaptive MG circuits with IN(MAGIC) and OUT(1) can be classically efficiently simulated in the strong, and hence also in the weak sense. However, allowing for OUT(MANY) i.e. a multiple number  $\mathcal{O}(n)$ of outputs, is a setting that is unlikely to be even weakly simulable, as we show in the following. (Theorem \ref{thm:MIMOsharpP} above has shown that for this setting, strong, but not necessarily weak, simulation is \#P-hard.) More precisely we will show that efficient weak simulability of this setting would imply collapse of the polynomial hierarchy PH.

\begin{theorem}
\label{thm:phcollaps}
Let $\cal A$ to be the set of processes defined by non-adaptive MG circuits with magic state and product state inputs, and multi-line outputs. Efficient weak classical simulability of $\cal A$ would imply a collapse of PH to its third level. 
\end{theorem}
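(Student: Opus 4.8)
The plan is to follow the standard anticoncentration-plus-Stockmeyer strategy used for establishing hardness of weak simulation, adapting the Clifford argument of \cite{JoVa14} (Theorem 7 there) to the matchgate setting. The essential point is that Theorem \ref{thm:MIMOsharpP} has already shown that \emph{strong} simulation of $\cal A$ is $\#P$-hard, and in fact the construction there embeds an arbitrary universal quantum computation into a circuit in $\cal A$ (via the SWAP-gadget with magic inputs $\ket{M}$, with post-selection on the auxiliary lines giving outcome $0$). So the hard part is already done at the level of encoding universal power; what remains is to upgrade this to a statement about weak simulation by invoking the standard complexity-theoretic machinery.

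First I would recall the Stockmeyer/Toda-type result: if a polynomial-time randomised classical algorithm can sample from (an approximation to) the output distribution of the family $\cal A$, then its output probabilities can be approximated in the third level of PH (more precisely, a \textsc{GapP}-valued quantity can be estimated by a $\mathrm{BPP}^{\mathrm{NP}}$ machine). Next I would use the embedding from the proof of Theorem \ref{thm:MIMOsharpP}: take an arbitrary \textsc{postBQP} computation, realise it as a universal MG+SWAP circuit $D$, gadgetise the SWAPs to obtain $D'\in\cal A$, and observe that the desired postselected output probability is the conditional probability $\operatorname{prob}_{D'}(y\mid 0_{a_1}\ldots 0_{a_{4K}})$. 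Because weak simulability lets the sampled probabilities be estimated within PH, and because postselection of this conditional amounts to \textsc{postBQP} $=$ PP (by Aaronson's theorem), one concludes that PP $\subseteq$ a class inside PH. Since PP is hard for PH (by Toda's theorem, $\mathrm{PH}\subseteq\mathrm{P}^{\mathrm{PP}}$), this forces the collapse of PH to a finite level, here its third.

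The cleanest route I would take is to phrase everything through $\textsc{PostBQP}=\textsc{PP}$. Concretely: an exact weak simulator for $\cal A$, combined with the ability to postselect (which the gadgetised circuit $D'$ realises by conditioning on the auxiliary lines reading $0$), would place all of $\textsc{PostBQP}=\textsc{PP}$ inside $\mathrm{BPP}^{\mathrm{NP}}\subseteq$ third level of PH; then $\mathrm{PH}\subseteq\mathrm{P}^{\mathrm{PP}}\subseteq\mathrm{P}^{\text{(third level)}}$ collapses. The main obstacle, and the place requiring care, is that a \emph{weak} simulator only samples and does not directly give probabilities; I must argue that Stockmeyer counting lets one approximate the relevant probability ratio well enough that the postselection survives. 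This hinges on the usual requirement that the target circuit's accepting probability be separated by an inverse-exponential gap (which holds for the \textsc{PostBQP}/\textsc{PP} encoding, where one amplifies to a constant gap), so that approximate sampling estimates still resolve the correct decision. I would therefore emphasise that the argument gives collapse to the third level under the assumption of an \emph{exact} weak simulator, exactly mirroring \cite{JoVa14}, and note that the matchgate-specific content --- the availability of a resourceful gate via the magic state $\ket{M}$ and the SWAP-gadget --- is entirely inherited from Theorem \ref{thm:MIMOsharpP} and \cite{HeJo19}.
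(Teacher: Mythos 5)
Your circuit-level construction is exactly the paper's: both take an arbitrary universal MG+SWAP circuit $D$, gadgetise the SWAPs using magic states $\ket{M}$ as in the proof of Theorem \ref{thm:MIMOsharpP}, drop the adaptive measurements and Pauli corrections, fSWAP the auxiliary lines away to be measured only at the end, and observe that post-selecting those lines on outcome $0$ recovers the universal computation --- i.e.\ the post-selected version of $\cal A$ realises post-selected universal quantum computation. Where you diverge is the complexity-theoretic wrap-up. The paper finishes in one step by citing \cite{BrJo11}: efficient weak simulability of any circuit class whose post-selected version is post-universal collapses PH to its third level (the proof there runs through $\mathrm{PP}=\mathrm{PostBQP}\subseteq\mathrm{PostBPP}$ plus Toda, postselecting the classical sampler directly). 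You instead re-derive that implication via Stockmeyer counting: an exact weak simulator lets a $\mathrm{BPP}^{\mathrm{NP}}$ machine multiplicatively approximate the sampler's output probabilities, including the marginal $\operatorname{prob}_{D'}(0_{a_1}\ldots 0_{a_{4K}})$, hence the conditional probability defining the PostBQP acceptance, hence decide $\mathrm{PP}$; Toda then gives the collapse. Both routes are valid and standard; the paper's is shorter for exact sampling, while your Stockmeyer route is the one that generalises to approximate (noisy) sampling given anticoncentration, so it proves a stronger statement at the cost of heavier machinery. Three small repairs: (i) the quantity estimated by Stockmeyer is a $\#\mathrm{P}$-type acceptance count of the classical sampler, not a GapP quantity --- multiplicative approximation of GapP functions is PP-hard (even determining the sign is), so phrased via GapP the step would be circular; (ii) the opening appeal to anticoncentration is a red herring for the exact-simulation statement you actually prove, as you yourself note at the end; (iii) the gap that matters is the constant promise gap of the PostBQP conditional probability --- the postselection probability itself may be exponentially small, which multiplicative approximation of numerator and denominator tolerates, so no ``inverse-exponential gap'' condition needs to be imposed on the target circuit.
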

\begin{proof}
The technique used to prove this theorem is very similar to that used to prove of its counterpart for Clifford circuits (see Theorem 7 of \cite{JoVa14}):  we will consider a setting in which, additionally, it is possible to post-select measurement outcomes produced by a process defined by $\cal A$. We will argue that within this post-selected variant of  $\cal A$, universal quantum computation with post-selection is possible. Then the proof is completed by the fact  \cite{BrJo11} that efficient weak simulability of any class of quantum circuits, whose post-selected version gives rise to post-selected universal quantum computation, implies collapse of PH. Note that the probability of the measurement outcomes, on which one post-selects, does not play a role as long as it is non-vanishing. 

It remains to show that  $\cal A$ with post-selection indeed gives rise to post-selected universal quantum computation. To this end, consider an arbitrary (universal) quantum circuit in terms of MGs and SWAP gates. As in the proof of Theorem \ref{thm:MIMOsharpP}, consider a `gadgetised' version of the circuit and just as there, let us omit the adaptive measurements, the Pauli-corrections, and use fSWAP gates to dispose the auxiliary lines, which are only measured at the end of the computation. Again, this may potentially corrupt the state of all of the involved lines. To avoid this, the measurements on the auxiliary lines are post-selected on obtaining the outcome 0. It can be readily verified that this procedure has the same effect as actually performing the intermediate measurements and adapting on the outcomes, and hence achieves universal quantum computation. Note that the described process is a post-selected ${\cal A}$ process, and thus completes the proof of the theorem.
\end{proof}

\section{Additional results on the computational power of MG circuits}
\label{sec:additionalSimu}

In this section, we present two additional settings for MG circuits complementing those that have already been considered. First, we consider MG circuits with adaptive measurements in the computational basis and input states in which fewer than four consecutive qubits may be entangled. 
We show that two-qubit entangled input states (on consecutive lines) remain classically efficiently simulable, while entanglements over sets of three consecutive lines allow universal quantum computation. Then, secondly, we consider MG circuits with computational basis inputs and adaptive measurements in {\em arbitrary} (1-qubit) bases. While measurements in the computational basis have been shown to be classically efficiently simulable, here we show that if adaptive measurements in any additional basis are available, then universal quantum computation becomes possible.

\subsection{Few-qubit entangled input states on consecutive lines}\label{sec:twoqubitinput}

As mentioned previously, the setting of MG circuits with product input states, final 1-qubit measurements in arbitrary bases and adaptive measurements in the computational basis, is weakly classically efficiently simulable~\cite{Br16}. In contrast to this (cf our discussion of magic states for MG circuits), if in this setting we allow 4-qubit entangled states on consecutive lines in the input then universal quantum computation becomes possible~\cite{HeJo19}. Hence the question of the simulation complexity of two- as well as three-qubit entangled input states naturally arises: how does the  computational power transition from the classically simulable single-qubit input setting into universal quantum computation for four- or more-qubit entangled input states?

In the following we show that MG circuits in the above setting with two-qubit entangled input states on consecutive lines remains classically simulable. To this end, we will use  the Hadamard gadget technique of \cite{Br16} (cf also Section \ref{sec:hadamardgadget}). Then we will show that the 3-qubit setting allows universal quantum computation.

\begin{theorem}
\label{theo:twoqubitinput}
Consider the setting of  $n$-qubit MG circuits with final single-qubit measurements in arbitrary bases, and adaptive measurements in the computational basis. Let the input state be an $n$-qubit state $\ket{\psi}$ that is a tensor product of arbitrary two-qubit entangled states on neighboring lines $\ket{\psi} = \ket{\alpha}_{12}\ket{\beta}_{34} \cdots \ket{\delta}_{(n-1) n}$. Then the resulting processes are classically weakly efficiently simulable.
\end{theorem}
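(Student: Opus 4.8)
The plan is to reduce the stated scenario to the already-established weakly simulable setting of \cite{Br16}, i.e.\ matchgate circuits with an arbitrary \emph{product} input state, adaptive computational-basis measurements, and final single-qubit measurements in arbitrary bases. The entire reduction hinges on a single claim: any two-qubit pure state $\ket{\alpha}$ supported on a pair of \emph{neighbouring} lines can be mapped to a product state by one matchgate acting on those two lines. Granting this, for each neighbouring input pair I would pick the corresponding disentangling matchgate $G_i$ and prepend the whole layer $G=\bigotimes_i G_i$ to the given circuit $U$. Because the pairs occupy disjoint consecutive lines, the $G_i$ act on disjoint nearest-neighbour pairs, so $G$ is a bona fide matchgate layer. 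Writing $\ket{\psi}=G^\dagger\ket{\mathrm{prod}}$ with $\ket{\mathrm{prod}}$ a product state, the computation becomes $U\ket{\psi}=(UG^\dagger)\ket{\mathrm{prod}}$: a matchgate circuit $UG^\dagger$ run on a product input, with the intermediate and final measurements left untouched since $G^\dagger$ precedes all of them. This is exactly the setting of \cite{Br16}, which is weakly efficiently simulable, so the theorem would follow.

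To establish the disentangling claim I would exploit the parity structure of matchgates. Decompose $\ket{\alpha}=c_e\ket{g_e}+c_o\ket{g_o}$ into its even- and odd-parity parts, with $\ket{g_e}\in\mathrm{span}\{\ket{00},\ket{11}\}$ and $\ket{g_o}\in\mathrm{span}\{\ket{01},\ket{10}\}$ normalised (these two-qubit fixed-parity states are automatically Gaussian). A matchgate has the block form $G(A,B)=A\oplus B$, with $A$ acting on the even and $B$ on the odd two-dimensional sector, subject only to $\det A=\det B$. Since each sector is two-dimensional, I can choose $A\in U(2)$ with $A\ket{g_e}=\ket{00}$ and $B\in U(2)$ with $B\ket{g_o}=\ket{01}$; applying the resulting gate gives
\begin{equation}
G(A,B)\ket{\alpha}=c_e\ket{00}+c_o\ket{01}=\ket{0}\otimes\bigl(c_e\ket{0}+c_o\ket{1}\bigr),
\end{equation}
which is a product state, as required.

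The step I expect to be the main obstacle is checking that $A$ and $B$ can be chosen \emph{simultaneously} subject to $\det A=\det B$: fixing the image of one unit vector in each block leaves a residual $U(1)$ phase freedom, and this phase moves $\det A$ (resp.\ $\det B$) around the whole unit circle, so the determinant-matching constraint can always be met without disturbing the chosen images. This is precisely where the restriction to \emph{two}-qubit entanglement is essential, since each parity sector is then only two-dimensional and carries just enough freedom to align both components at once; the analogous move fails for three or more consecutive entangled lines, consistent with the universality established immediately afterwards. A final routine point to verify is that prepending $G^\dagger$ does not interfere with the adaptive schedule, which holds because $G^\dagger$ acts strictly before every measurement of the circuit.
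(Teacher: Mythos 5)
Your proof is correct, but it takes a genuinely different route from the paper's. You disentangle the input directly: exploiting the block structure $G(A,B)=A\oplus B$ and the residual phase freedom on the orthogonal complement inside each two-dimensional parity sector (which indeed lets you rotate $\det A$ and $\det B$ independently around the unit circle without disturbing the prescribed images $A\ket{g_e}=\ket{00}$, $B\ket{g_o}=\ket{01}$; the degenerate fixed-parity case is trivial), a single matchgate maps each neighbouring entangled pair to $\ket{0}\otimes\left(c_e\ket{0}+c_o\ket{1}\right)$. Prepending the resulting depth-one layer then turns the whole process into an adaptive MG circuit on an arbitrary \emph{product} input with the measurement schedule untouched, which is exactly the setting proved weakly simulable in \cite{Br16}. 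The paper runs the reduction in the opposite direction: it constructs an adaptive MG gadget that \emph{prepares} the two-qubit entangled inputs starting from the product state $\ket{+}\ket{00}\ket{+}\ket{00}\cdots\ket{+}$, using the Hadamard gadget of \cite{Br16}, phase gates, and the matchgates $e^{i (\alpha X \otimes X + \beta Y \otimes Y)}$ to realise arbitrary two-qubit unitaries on the computational line pairs (via the decomposition of \cite{KrCi01}), after which the auxiliary lines are measured and adaptively swapped out; this reduces the problem to the simulability of adaptive MG circuits on $\ket{0}^{\otimes n}\ket{+}$. Your argument is more economical --- no auxiliary lines, no adaptive gadgetry, only constant circuit overhead --- and it isolates cleanly why two qubits is the threshold: each parity sector of two qubits is only two-dimensional, so one matchgate can align both components at once. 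What the paper's preparation-gadget machinery buys in exchange is a constructive template that the authors reuse immediately afterwards to discuss which sub-families of \emph{three}-qubit input states can still be generated (and hence simulated), a generalisation that your disentangling argument does not directly provide.
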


\begin{proof}
As explained in Section \ref{sec:hadamardgadget}, a MG circuit can be constructed that transforms a state $\ket{0}^{\otimes n}\ket{+}$ into an arbitrary $n$-qubit product state \cite{Br16}. It has been proven in \cite{Br16} that an adaptive MG circuit with the input state $\ket{0}^{\otimes n}\ket{+}$ is classically weakly efficiently simulable. Here we show that there exists an adaptive MG circuit that transforms the input state $\ket{0}^{\otimes n+\lfloor n/2 \rfloor}\ket{+}$ to arbitrary 2-qubit entangled states on neighbouring lines, $\ket{\psi} = \ket{\alpha}_{12}\ket{\beta}_{34} \cdots \ket{\delta}_{(n-1) n}$. With the above result of \cite{Br16}, this will complete the proof of Theorem \ref{theo:twoqubitinput}.
 
To achieve the above claimed reduction, we construct a small MG circuit involving adaptive measurements in the computational basis which, when applied to the $(n+\lfloor n/2 \rfloor +1)$-qubit product state $\ket{\psi_0} = \ket{+}\ket{00}\ket{+}\ket{00}\ket{+}\ldots\ket{+}\ket{00}\ket{+}$, generates arbitrary two-qubit states on neighbouring lines.
Considering $\ket{\psi_0}$, let us call those lines that are initialised in $\ket{+}$ auxiliary lines and let us call those lines that are initialised in $\ket{0}$ computational lines. The computational lines will eventually carry the two-qubit entangled states while the auxiliary lines will be measured in the process.

The Hadamard gadget introduced in \cite{Br16} (see also Section~\ref{sec:hadamardgadget}) makes it possible to implement Hadamard gates on a line juxtaposed to any line whose state is $\ket{+}$, i.e., on all the computational lines. Recall that single qubit phase gates are matchgates. Then, with availability of the Hadamard gate, one can implement arbitrary single-qubit unitaries on the computational lines. Moreover, recall that gates of the form $e^{i (\alpha X \otimes X + \beta Y \otimes Y)}$, where $\alpha, \beta \in \mathbb{R}$, are MGs and may thus be implemented on (neighbouring) computational line pairs (cf Section \ref{sec:simreview} and also \cite{BrGa11}). This suffices to implement arbitrary two-qubit unitaries on the computational line pairs, which may be seen e.g. using the decomposition of two-qubit unitaries given in \cite{KrCi01}. See also Figure \ref{fig:twoqubitreduction} for a picture of the construction. Finally, the $n/2$ auxiliary lines are discarded by first measuring them in the computational basis and then depending adaptively on the measurement outcome,  swapping them out using either $fSWAP$ or $G(-Z,X)$. One thus ends up with $n$ consecutive lines initialised in any desired two-qubit states on consecutive lines, completing the proof of the Theorem.
\end{proof}

\begin{figure*}[t]
\includegraphics[width=0.8\linewidth]{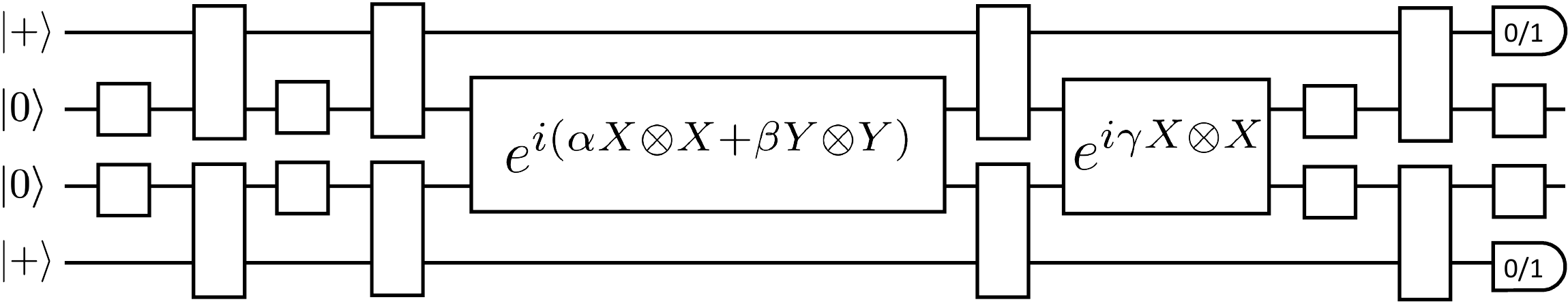}
\caption{Sketch of the gadget described in the proof of Theorem \ref{theo:twoqubitinput}. The circuit gadget allows the preparation of an arbitrary (not only fermionic) two-qubit state with the help of auxiliary lines in the state $\ket{+}$. To this end, the Hadamard gadget introduced in \cite{Br16} is utilised, as well as a decomposition of arbitrary two-qubit unitaries from \cite{KrCi01} and the Euler decomposition of single-qubit unitaries. Here, blank single-qubit operations denote phase gates and blank two-qubit operations denote the Hadamard gadget. Finally, the auxiliary lines in the state $\ket{+}$ are discarded as described in the proof.}
\label{fig:twoqubitreduction}
\end{figure*}

Let us remark here that it is crucial to have the entangled two-qubit input states being on {\em neighbouring} lines in order to have a classically efficiently simulable situation. Indeed, the setting of entangled two-qubit input states on non-neighbouring lines, such as e.g. the magic state $\ket{M}$, actually allows universal quantum computation. In~\cite{Br16} it is asserted that the overall ordering of the qubits in MG circuits is irrelevant for issues of efficient classical simulability. However the argument given there (involving translating the MG process back to a process on fermionic modes, then relabelling the modes and finally translating back to a MG description), does not apply to entangled input states (e.g. generally altering the range of entanglements) so there is no contradiction with our results. We discuss this issue further in Appendix \ref{sec:reordering}.

Finally here, we address three-qubit entangled input states. It can be easily seen that the circuit gadget constructed in the proof of Theorem \ref{theo:twoqubitinput} may be generalized to generate certain sub-families of three-qubit states. However, such constructions seem to not allow the generation of all three-qubit states. Indeed note that e.g. two copies of the three-qubit GHZ-state on neighbouring lines may be converted into a copy of the magic state $\ket{M}$ via matchgates and adaptive measurements in the computational basis (cf also \cite{HeJo19}). Thus, MG circuits with adaptive measurements in the computational basis supplied with certain three-qubit entangled input states (each on consecutive lines) can allow universal quantum computation.

Altogether, we thus obtain the following picture of the transition from a classically simulable situation into universal quantum computation for few-qubit entangled input states. We have that the classical simulability of single-qubit input states is still retained for entangled two-qubit input states on consecutive lines. This is not the case any more for entangled three-qubit states on neighboring lines. While certain subclasses of three-qubit states remain classically simulable, others, however, may be converted into magic states and thus allow universal quantum computation. Then, considering four- or more-qubit states, clearly universal quantum computation is possible, as fermionic non-Gaussian states (and thus magic states) exist within these states. Furthermore, if 2-qubit entangled input states on non-consecutive lines are allowed, this recovers universal quantum computation too.

\subsection{MG circuits with adaptive measurements in arbitrary bases}
\label{sec:adaptivemeas}

In this subsection we study the computational power of MG circuits with product state input and adaptive measurements in arbitrary bases. While it has been show that adaptive measurements in the computational basis are classically efficiently simulable \cite{Br16}, not much is known about adaptive measurements in differing bases. First, we start out with a simple scenario, in which a specific additional measurement basis is available, namely measurements in the basis $\{\ket{+},\ket{-}\}$. We observe that the possibility of measuring in this basis (in addition to the computational basis) allows universal quantum computation. Then we show that more generally, the possibility of adaptively measuring in the computational basis as well as any additional basis (sufficiently distinct from the computational basis) allows universal quantum computation.

\begin{observation}
\label{obs:adaptplus}
MG circuits with computational basis input and adaptive measurements in the computational basis as well as the $\{\ket{+},\ket{-}\}$-basis allow universal quantum computation.
\end{observation}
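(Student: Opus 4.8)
The plan is to show that the adaptive $\{\ket{+},\ket{-}\}$ measurement lets us \emph{deterministically} implement a single resourceful non-matchgate two-qubit gate, after which universality follows from known results: matchgates together with the n.n.\ SWAP gate form a universal set \cite{JoMi08}, and equivalently matchgates together with any parity-preserving non-matchgate two-qubit gate are universal \cite{BrGa11}. Concretely, since $\mathrm{fSWAP}=G(Z,X)$ is a matchgate differing from SWAP only by a $CZ$ on the $\ket{11}$ component (so that $\mathrm{SWAP}=\mathrm{fSWAP}\cdot CZ$), it suffices to realise one $CZ$ on two neighbouring lines. Observe further that $CZ$ is a product of matchgate phase gates with the coupling $e^{i\theta Z\otimes Z}$, and $Z_iZ_{i+1}$ is precisely the parity-preserving quadratic coupling absent from the matchgate generator list $X_iX_{i+1},Y_iY_{i+1},X_iY_{i+1},Y_iX_{i+1},Z_i,Z_{i+1}$. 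Thus the whole task reduces to injecting this one missing coupling.

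First I would note that, starting from computational-basis inputs, an adaptive $\{\ket{+},\ket{-}\}$ measurement on a line followed by a conditional matchgate phase when the outcome is $\ket{-}$ prepares a fresh $\ket{+}$ ancilla on any line on demand, a resource unavailable with computational inputs alone. With such an ancilla the Hadamard gadget of \cite{Br16} (Section \ref{sec:hadamardgadget}) and the couplings $e^{i(\alpha X\otimes X+\beta Y\otimes Y)}$ become available mid-circuit. Next I would build a small gate-teleportation gadget that entangles the ancilla with the two target lines by matchgate couplings and then consumes it in an adaptive $\{\ket{+},\ket{-}\}$ measurement, so that the post-measurement state equals $CZ$ applied to the targets up to a Pauli byproduct fixed by the random outcome. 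The essential point is that the $X$-basis measurement is used as an \emph{adaptive intermediate} measurement: since $X_k$ is an odd-degree Majorana monomial rather than a quadratic (Gaussian) observable, branching on its outcome supplies exactly the non-Gaussian ingredient the matchgate formalism lacks. This also explains why the construction does not contradict the simulability result of \cite{Br16}, which admits non-computational measurements only as final, non-adaptive read-outs.

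The byproduct operators would then be corrected adaptively: a single-line $Z$ is a matchgate phase gate, whereas a single-line $X$ is not a matchgate and would be implemented, as in the proof of Theorem \ref{TH:sharpP}, by an $\mathcal{O}(n)$-length ladder of $X_j\otimes X_{j+1}$ matchgates terminating on a spare ancilla. Performing these corrections conditioned on the measurement outcomes makes the injected $CZ$ deterministic, and composing it with fSWAP yields a deterministic n.n.\ SWAP, whence universality via \cite{JoMi08}. I expect the main obstacle to be the explicit construction and verification of the teleportation gadget in the second step, namely arranging the matchgate couplings and the ancilla so that the $\{\ket{+},\ket{-}\}$ measurement projects onto precisely $CZ$ up to a correctable Pauli, and checking that every required correction remains inside the matchgate-plus-adaptive-measurement toolkit. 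Equivalently, one may phrase the target as producing the magic state $\ket{M}=\ket{\Phi^+}_{13}\ket{\Phi^+}_{24}=\mathrm{SWAP}_{23}\bigl(\ket{\Phi^+}_{12}\ket{\Phi^+}_{34}\bigr)$ and then invoking Theorem \ref{theo:allnongaussianmagic}; the content of the gadget is the same, and it specialises the more general construction underlying Theorem \ref{theo:adaptarbitrary}.
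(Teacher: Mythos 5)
Your opening steps already contain the paper's entire proof of Observation \ref{obs:adaptplus}: an adaptive $\{\ket{+},\ket{-}\}$ measurement followed by a conditional $Z$ (a matchgate phase gate) deterministically produces a $\ket{+}$ ancilla, which can be swapped next to any target line with $fSWAP$ or $G(-Z,X)$ gates; the Hadamard gadget of \cite{Br16} then implements a Hadamard on that line, and the ancilla is disposed of by a computational-basis measurement and swapped back out, costing two adaptive measurements and $\mathcal{O}(n)$ matchgates per Hadamard. At that point you are done: matchgates plus Hadamard form a universal set, since phase gates and $H$ generate arbitrary single-qubit unitaries, and these together with the couplings $e^{i(\alpha X\otimes X+\beta Y\otimes Y)}$ give arbitrary two-qubit gates. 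In particular, your own target $CZ$ is obtained \emph{deterministically and unitarily} by conjugating $e^{i\theta X\otimes X}$ with Hadamards to get $e^{i\theta Z\otimes Z}$ --- no measurement gadget is needed, and universality follows via \cite{BrGa11} or \cite{JoMi08} exactly as you say.

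The genuine gap is that your declared plan does not stop there, but instead hinges on a ``gate-teleportation gadget'' in which matchgate couplings entangle the ancilla with two target lines and an adaptive $X$-basis measurement projects onto $CZ$ up to Pauli byproducts --- and this gadget is never constructed; you yourself flag it as the main obstacle. Its existence is precisely what is at issue, and it is not routine. The matchgate generators contain no $Z\otimes Z$ coupling, and couplings such as $e^{i\theta X_a X_{\rm anc}}$ commute with the measured observable $X_{\rm anc}$, so measuring the ancilla afterwards only yields outcome-dependent \emph{local} rotations on the targets, never an entangling gate. Likewise, the standard teleportation resource $CZ_{23}\bigl(\ket{\Phi^+}_{12}\ket{\Phi^+}_{34}\bigr)$ (and the state $\ket{M}$ in your alternative phrasing) is non-Gaussian: if it were preparable by matchgates, then matchgates with adaptive \emph{computational-basis} measurements alone would implement $CZ$, contradicting the simulability result of \cite{Br16}. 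So any valid construction of your gadget must already route through something like the on-demand $\ket{+}$ and Hadamard-gadget argument --- which makes the detour both unproven and unnecessary. The repair is simply to delete the second half of your plan and close the argument right after the Hadamard gadget, which is the paper's proof.
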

\begin{proof}
We show that with the described adaptive measurements and a single auxiliary line at hand, Hadamard gates may be implemented at any place within a MG circuit at the cost of two adaptive measurements and $\mathcal{O}(n)$ MGs per Hadamard. Then the observation  follows from the fact that MGs together with the Hadamard form a universal gate set\footnote{This can be seen by recalling that phase gates (which are matchgates), together with the Hadamard gate allow implementation of an arbitrary single qubit gate, and that these with matchgates allow implementation of arbitrary two-qubit gates.}.

The construction requires one auxiliary line (e.g. at the bottom of the circuit) which is initialised in either $\ket{0}$ or $\ket{1}$. Suppose one wishes to implement a Hadamard gate at a certain point in the circuit, on a specified line (target line). Then, the first step is to bring the auxiliary line into a position neighbouring to the target line (e.g. by using a sequence of  $\mathcal{O}(n)$ $fSWAP$s or $G(-Z,X)$s). Then the auxiliary line is measured in the basis $\{\ket{+},\ket{-}\}$. Adaptively applying $Z$ then deterministically prepares the state $\ket{+}$. Now, the Hadamard gadget~\cite{Br16} is utilised to implement the Hadamard on the target line. Finally the auxiliary line is removed as follows:  a computational basis measurement is performed on it and once more $fSWAP$s or $G(-Z,X)$ are used to return the auxiliary qubit to the position from which it was swapped in initially. It may be then reused to implement another Hadamard  later on.
\end{proof}

Let us remark here that, as discussed in \cite{HeJo19}, in view of the above proof one might spuriously think that a MG circuit with adaptive measurements and input states containing $\ket{+}$ might also allow universal quantum computation. Opposed to that, however, the result in \cite{Br16} shows that such a situation is classically efficiently simulable. An additional key ingredient, which adaptive measurements in the basis $\{\ket{+},\ket{-}\}$ bring in, is the possibility to obtain $\ket{+}$ at any place needed, which is not possible when they are merely supplied as input state, as SWAP is not an allowed operation and fSWAP would entangle the auxiliary line with the other lines. And if $\ket{+}$'s were initially placed amongst the input qubits at all needed positions, they would fragment the remaining input into sectors for n.n.  MG actions until they are used.

Observation \ref{obs:adaptplus} straightforwardly generalises to measurements in any basis of the form $\{ \ket{0} + e^{i \phi} \ket{1}, \ket{0} - e^{i \phi}\ket{1}\}$ for $\phi \in \mathbb{R}$. This is due to the fact that single qubit phase gates are matchgates and may thus be utilised prior to an adaptive measurement in order to effectively implement a measurement in the basis $\{\ket{+},\ket{-}\}$ instead.
In the following theorem, we generalise Observation \ref{obs:adaptplus} to adaptive measurements in any basis differing from the computational basis. As phase gates are MGs we will without loss of generality consider measurements in the bases $\{\cos x\, \ket{0} + \sin x\, \ket{1},\sin x\, \ket{0} - \cos x\, \ket{1}\}$, where $x \in (0, \pi/4]$, in the following.

\begin{theorem}
\label{theo:adaptarbitrary}
MG circuits with computational basis input and adaptive measurements in the computational basis as well as the $\{\cos x\, \ket{0} + \sin x\, \ket{1},\sin x\, \ket{0} - \cos x\, \ket{1}\}$-basis for some $x \in (0, \pi/4]$ allow universal quantum computation. Here, $x$ may be either constant (independent of $n$), or lower bounded away from 0 by some inverse polynomial in $n$.
\end{theorem}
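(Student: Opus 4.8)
The plan is to follow the blueprint of Observation~\ref{obs:adaptplus}: since matchgates together with a single Hadamard form a universal gate set (cf.\ the footnote to Observation~\ref{obs:adaptplus}), it suffices to show that a Hadamard can be implemented at any point in a MG circuit at polynomial cost, now using only adaptive measurements in the computational basis and in the tilted basis $\{\cos x\,\ket0+\sin x\,\ket1,\ \sin x\,\ket0-\cos x\,\ket1\}$. The essential new difficulty compared to Observation~\ref{obs:adaptplus} is that measuring a fresh $\ket0$ auxiliary in this basis no longer collapses it to $\ket+$, but to the tilted state $\ket{\mu_x}:=\cos x\,\ket0+\sin x\,\ket1$ (or to its orthogonal complement), which for small $x$ is close to the free state $\ket0$ and hence only weakly resourceful.

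First I would establish deterministic preparation of $\ket{\mu_x}$ at any desired position, generalising the deterministic preparation of $\ket+$ in Observation~\ref{obs:adaptplus}. One measures an auxiliary line (prepared in $\ket0$) in the $x$-basis; the two outcomes leave it in $\ket{\mu_x}$ or in $\sin x\,\ket0-\cos x\,\ket1$. Since the latter equals $ZX\ket{\mu_x}$, an adaptive $X$-and-$Z$ correction (the $X$ realised on a single line by the $X_jX_{j+1}$ ladder acting into a fresh $\ket0$, as in the proof of Theorem~\ref{TH:sharpP}, and the $Z$ being a phase gate, hence a MG) maps either outcome deterministically to $\ket{\mu_x}$, which can then be routed to the target region using $fSWAP$ and $G(-Z,X)$.

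The core step is a state-injection gadget converting $\ket{\mu_x}$ into a Hadamard on an arbitrary target line. I would place $\ket{\mu_x}$ on a line neighbouring the target, apply a matchgate entangler $e^{i(\alpha X\otimes X+\beta Y\otimes Y)}$ (with the real parameters $\alpha,\beta$ to be tuned), and adaptively measure the resource line in the computational basis. Because $\ket{\mu_x}$ mixes both parity sectors, a suitable flagged outcome yields a non-trivial Kraus operator on the target, which one tunes to equal a matchgate multiple of $R_Y(\pi/2)$; combined with a matchgate $Z$ this gives $H=R_Y(\pi/2)Z$ exactly on success, while the complementary (failure) outcome implements a matchgate-invertible operation so that the gadget may simply be repeated. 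This repeat-until-success structure makes each implemented Hadamard \emph{exact}, so no approximation error accumulates across the $\operatorname{poly}(n)$ Hadamards of the simulated universal circuit.

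The main obstacle, and the place where the hypothesis $x\ge 1/\operatorname{poly}(n)$ enters, is controlling the efficiency. As $x\to0$ the resource $\ket{\mu_x}\to\ket0$ carries no rotation power, so the success probability of the injection necessarily vanishes; the quantitative task is to show it is bounded below by an inverse polynomial (I expect $\Theta(x^2)$) whenever $x\ge 1/\operatorname{poly}(n)$, so that a standard Chernoff argument bounds the expected number of repetitions, and hence the number of auxiliary lines and measurements, by $\operatorname{poly}(n)$; for constant $x$ the overhead is constant. The genuinely technical point will be exhibiting the tunable entangler and flagged outcome explicitly within the matchgate constraints---no SWAP or CNOT is available, arbitrary single-qubit rotations are not matchgates, and parity preservation restricts the admissible couplings---and verifying that the heralded Kraus operator can indeed be tuned to $R_Y(\pi/2)$ with an inverse-polynomial success probability. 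Finally, the reduction to bases of the stated one-parameter family (rather than a general off-axis basis) is harmless, since phase gates are matchgates and may be applied immediately before the measurement, exactly as noted before the theorem statement.
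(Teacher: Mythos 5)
Your high-level architecture --- reduce universality to implementing Hadamards as in Observation~\ref{obs:adaptplus}, generate the resource via a tilted-basis measurement plus a tunable matchgate $e^{i(\alpha X\otimes X+\beta Y\otimes Y)}$ plus a computational-basis measurement, and argue a per-round success probability $\Theta(x^2)\ge 1/\operatorname{poly}(n)$ so that repetition gives polynomial overhead --- is the same as the paper's, and your deterministic preparation of the tilted state $\cos x\,\ket{0}+\sin x\,\ket{1}$ (via a $ZX$ correction, the $X$ realised by the $X_jX_{j+1}$ ladder) is sound. The genuine gap is in the repeat-until-success step of your injection gadget: the claim that the failure outcome ``implements a matchgate-invertible operation so that the gadget may simply be repeated'' is false. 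Compute the heralded operators explicitly: writing $\theta_e=\alpha-\beta$, $\theta_o=\alpha+\beta$ for the rotation angles induced in the even/odd parity sectors, the operator acting on the target upon outcome $0$ (resp.\ $1$) of the computational-basis measurement on the resource line is
\begin{align*}
K_0&=\begin{pmatrix}\cos x\cos\theta_e & i\sin x\sin\theta_e\\ i\sin x\sin\theta_o & \cos x\cos\theta_o\end{pmatrix},\\
K_1&=\begin{pmatrix}\sin x\cos\theta_o & i\cos x\sin\theta_o\\ i\cos x\sin\theta_e & \sin x\cos\theta_e\end{pmatrix}.
\end{align*}
$K_0$ is proportional to a unitary only if $\theta_o=\theta_e$ (then both $K_0,K_1$ are $X$-rotations, never Hadamard-like) or $\theta_o=\theta_e+\pi$, in which case $K_0=\cos x\cos\theta_e\,Z-\sin x\sin\theta_e\,Y$ and $K_1=\cos x\sin\theta_e\,Y-\sin x\cos\theta_e\,Z$, i.e.\ both branches are reflections about axes in the $Y$--$Z$ plane. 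Tuning $\theta_e=\pi/2-x$ indeed makes $K_0\propto (Z-Y)/\sqrt{2}$, a Hadamard conjugated by phase gates, with success probability $\sin^2(2x)/2$ --- this part of your plan works. But then $K_1\propto \cos^2 x\, Y-\sin^2 x\, Z$, which is not invertible within your gate set: the only single-line matchgates are $Z$-rotations (cf.\ Eq.~(\ref{eq:mgdecomp})), and no $Z$-rotation can undo a $Y$--$Z$ reflection. Nor can the gadget be re-tuned to make failure harmless: demanding $K_1\propto e^{i\delta Z}$ forces $\sin\theta_e=\sin\theta_o=0$, which collapses $K_0$ to a diagonal matrix, and the local $e^{i\phi Z}$ dressings of general matchgates only conjugate both $K_b$ by phase gates. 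So after a failure the data qubit carries a unitary error you cannot remove; ``simply repeating'' compounds $Y$--$Z$ reflections into $X$-rotations, which are equally unavailable, so the error never cancels without substantial extra machinery (an error-tracking correction scheme over the group these reflections generate) that you neither provide nor invoke.

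The fix is to run the probabilistic gadget \emph{offline}, which is exactly what the paper does. Apply your injection not to the data but to a fresh auxiliary line in $\ket{0}$: on success the auxiliary is heralded in $(Z-Y)\ket{0}\propto \ket{0}-i\ket{1}$, i.e.\ $\ket{+}$ up to a phase gate; on failure it is left in a known single-qubit state that is simply discarded (measured in the computational basis and swapped out) and the attempt repeated, with the data never touched. The heralded $\ket{+}$ is then consumed \emph{deterministically} by the Hadamard gadget of \cite{Br16}, as in your Observation~\ref{obs:adaptplus} blueprint. With this one change, your probability and overhead analysis (success probability $\Theta(x^2)$, exponentially good success after polynomially many attempts, and the reduction to the one-parameter family of bases via phase gates) goes through; the paper's own gadget is precisely of this offline form, using two copies of the tilted state and a matchgate $G(A,B)$ built from two reflections to herald $\ket{+}$ with probability $\sin^2(2x)$.
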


\begin{proof}
We will show that by means of the adaptive measurements in the two available bases, it is possible to construct the state $\ket{+}$ with high probability at arbitrary positions within the circuit. Then as in the proof of Observation \ref{obs:adaptplus}, universal quantum computational power is obtained.

Let us now describe a procedure creating the state $\ket{+}$. First, similarly as in the proof of Observation \ref{obs:adaptplus}, two (rather than one, as there) auxiliary lines in the state $\ket{0}$ are swapped into the desired position. Then, both of the auxiliary lines are measured in the second (the non-computational)  basis. Adaptively performing $Z$ corrections, one deterministically obtains $\cos x\, \ket{0} + \sin x\, \ket{1}$ on both lines. Next, the MG $G(A,B)$ is applied, where $A =  \begin{pmatrix} \sin y & \cos y \\  \cos y & -\sin y \end{pmatrix}$ and  $B = \begin{pmatrix} \sin z & \cos z \\  \cos z & -\sin z \end{pmatrix}$, where $y,z \in \mathbb{R}$ will be specified shortly. A computational basis measurement is performed on the first auxiliary qubit. It can be easily verified that in case of obtaining the measurement outcome 0, the state of the second auxiliary qubit is proportional to $[\sin^2x \cos y + \cos^2x \sin y]\, \ket{0} + [\sin x \cos x (\sin z + \cos z )]\,\ket{1}$. Thus, choosing $z=\pi/4$ (i.e. $B$ is a Hadamard operation) and $y = \arctan\frac{\tan x (\sqrt{2}  + \tan x)}{1 + \sqrt{2} \tan x}$, the second qubit is in the state $\ket{+}$. Moreover, the probability of obtaining measurement outcome 0 is given by $p(0) = \sin^2(2x)$. Here $y$ and $z$ have been chosen to maximise  the probability of obtaining $\left|+\right>$ within the described construction. See also Figure \ref{fig:adaptgadget} for a sketch of the described sequence. In case one instead obtains the undesired measurement outcome 1, the second auxiliary qubit will also be in the state $\ket{1}$ and hence the process must be iterated.

\begin{figure}[ht]
\includegraphics[width=0.8\linewidth]{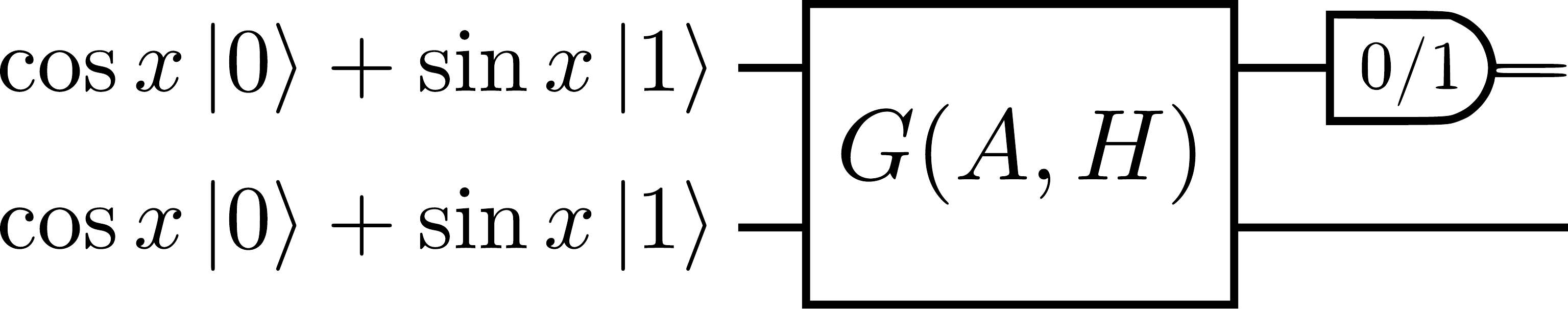}
\caption{Sketch of a circuit gadget creating a copy of $\ket{+}$ with probability $\sin^2(2x)$. Here $A$ is as given in the text. In case of success (measuring outcome 0 on the first line), the second line carries the state  $\ket{+}$.}
\label{fig:adaptgadget}
\end{figure}
If $x$ is a constant which is strictly larger than 0, or $x$ is bounded away from 0 by an inverse polynomial in $n$, an acceptably small number of repetitions of the described process suffice to generate $\ket{+}$ with sufficiently high probability, exponentially close to 1. More precisely, $\ket{+}$ may be obtained in the two cases with an arbitrarily high probability $1 - \epsilon$ using $\mathcal{O}(\operatorname{poly}(1/\epsilon))$  resp. $\mathcal{O}(\operatorname{poly}(n,1/\epsilon))$ adaptive measurements. This completes the proof.
\end{proof}

\section{Discussion}

In this work we have studied the complexity of classically simulating MG circuits for a wide variety of settings. We have compared our results for MGs with their counterparts for Clifford circuits, although these two gate sets are of a very different nature. MG circuits are subject to a n.n. locality constraint, while Clifford circuits are not. And Clifford gates form a discrete gate set, while the MGs form a continuous set. Yet, notably, despite these differences we observed that with respect to complexity of simulation, their behaviour showed some striking parallels.

Clifford computations have been very much studied and naturally, the question arises whether there may be fruitful MG analogues of further Clifford computation constructions. For example, could it be possible to devise hybrid quantum-classical computation schemes based on MG circuits, akin to those that have been developed for Clifford circuits in \cite{BrSm16,BrGo16}? Moreover, it would be interesting to see whether the classical simulation results obtained here for MGs may be used for issues of verification of quantum computations expressed in terms of extended MG circuits, akin to the ideas for Clifford circuits in \cite{JoSt17}. 

Another question which remains open is how the scaling of classical simulation methods of simulating MG circuits with adaptive measurements and few magic states, may be improved compared to the plain scaling result in Theorem \ref{theo:fewentangled}.
 The analogous question in the Clifford setting has been the subject of intense study~\cite{BrSm16}. The main tool there was the introduction of the notion of stabiliser rank of a state $\ket{\psi}$ viz. the number of components in the smallest decomposition of $\ket{\psi}$ in terms of (weighted) stabilizer states. Finding the smallest decomposition for a state $\ket{\psi}$ which is given by a tensor product of magic states has a direct impact on the efficiency of direct classical simulation methods for the universal Clifford + $T$ gate set. One can introduce an analogous quantity for MG circuits: given an $n$-qubit state $\ket{\psi} = \sum_{i=1}^{\chi}\alpha_i\ket{\gamma_i}$ its {\it Gaussian rank} of $\ket\psi$ is the smallest $\chi$ for which it admits the decomposition where the $\ket{\gamma_i}$ are Gaussian states, i.e, states that may be generated by MG circuits from computational basis states. Our magic state $\ket{M}$ for MG circuits has a Gaussian rank of 2. However, it is unclear whether multiple ($k$) copies of it have a Gaussian rank that is smaller than $2^k$, or indeed how a small Gaussian rank may be utilised to advantage in classical simulation methods for MGs (see proof of Theorem \ref{theo:fewentangled}).

\begin{acknowledgments}
We thank Mithuna Yoganathan for helpful discussions and suggestions on the content of this work, especially in relation to Figure 1.
 M.H. and B.K. acknowledge financial support from the Austrian Science Fund (FWF) grant DK-ALM: W1259-N27 and the SFB BeyondC. Furthermore, B.K. acknowledges support of the Austrian Academy of Sciences via the Innovation Fund ``Research, Science and Society''. R.J. and S.S. acknowledge support from the QuantERA ERA-NET Cofund in Quantum Technologies implemented within the European Union's Horizon 2020 Programme (QuantAlgo project), and administered through the EPSRC grant EP/R043957/1., and S.S. by the Leverhulme Early Career Fellowship scheme.
\end{acknowledgments}

\appendix

\section{Remarks on the reordering of lines in matchgate circuits}
\label{sec:reordering}
In~\cite{Br16} it is asserted that the overall ordering of the qubits in MG circuits is irrelevant for issues of efficient classical simulability (cf paragraph before the one containing eq. (7) in~\cite{Br16}). This applies straightforwardly if the reordering of the input state and the final measurements is possible via fermionic SWAPs as applies to computational basis inputs and final measurements in the computational basis. In this setting, applying fSWAPs effectively corresponds to reordering the lines. 
Furthermore the method of \cite{Br16}, involving relabelling of fermionic modes after reinterpreting the MG circuit as a free fermionic evolution, also extends the use of fSWAPS to be applicable to the reordering of general product state inputs too. However, importantly, as mentioned in the main text after our proof of Theorem \ref{theo:twoqubitinput}, in a setting involving entangled state inputs, line reordering can have a dramatic effect on classical simulation complexity.

We mention also that as long as no adaptive measurements are performed, a different argument can be used to address reordering issues in the case of product state input:  it is shown in \cite{JoMi08} that the technique of classical simulation there can be employed for product state input (and for few entangled input states as discussed in Section \ref{sec:clsim}) even if the whole MG circuit is conjugated by an arbitrary Clifford operation (which includes permuting the lines). This then implies that reordering in such a setting does not alter the simulation complexity.

Let us finally briefly mention the setting of MG circuits with product state inputs and $k$ final measurements in arbitrary bases, which has been shown in \cite{Br16} to be weakly classically efficiently simulable.  The Hadamard gadget is employed there not only to prepare a product state at the beginning of the circuit (cf Section \ref{sec:hadamardgadget}), but also to reduce the final general basis measurements to computational basis measurements. The construction initially applies only to the measurements being on the final $k$ lines and is then generalised to any subset of $k$ lines by invoking the reordering method of \cite{Br16} mentioned above. We note here an alternative way of achieving this final step. The case of $k = n$ could be initially considered and sampled by weak classical simulation (being an example of final contiguous lines) and then a sample from the probability distribution corresponding to measuring any $k<n$ of the lines may be obtained by disregarding the outcomes for lines one is not interested in.

\end{document}